\newtheorem{theorem}{Theorem}
\newtheorem{lemma}[theorem]{Lemma}
\newtheorem{fact}[theorem]{Fact}
\theoremstyle{remark}
\newtheorem{remark}[theorem]{Remark}
\theoremstyle{definition}
\newtheorem{definition}[theorem]{Definition}
\newcommand{\beq}{\begin{eqnarray}}
\newcommand{\eeq}{\end{eqnarray}}
\newcommand{\ket}[1]{|#1\rangle}
\newcommand{\bra}[1]{\langle#1|}
\newcommand{\proj}[1]{\ket{#1}\!\bra{#1}}
\newcommand{\reg}[1]{{\textsf{#1}}}
\newcommand{\C}{\ensuremath{\mathbb{C}}}
\newcommand{\mH}{\mathcal{H}}
\newcommand{\mQ}{\mathcal{Q}}
\newcommand{\mP}{\mathcal{P}}
\newcommand{\setft}[1]{\mathrm{#1}}
\newcommand{\Lin}{\setft{L}}
\newcommand{\eps}{\varepsilon}
\newcommand{\GHZ}{\textsc{GHZ}}
\newcommand{\MS}{\textsc{MS}}
\newcommand{\EPR}{\textsc{EPR}}
\newcommand{\pauli}{\textsc{P3}}
\newcommand{\game}{\textsc{3EMB}}
\newif\ifnotes\notesfalse
\definecolor{mygrey}{gray}{0.50}
\newcommand{\notename}[2]{{\textcolor{mygrey}{\footnotesize{\bf (#1:} {#2}{\bf ) }}}}
\newcommand{\pnote}[1]{{\endnote{#1}}}
\newcommand{\notename}[2]{{}}
\newcommand{\pnote}[1]{}
\newcommand{\tnote}[1]{\textcolor{magenta}{\small {\textbf{(Thomas:} #1\textbf{)
      }}}}
\def\<{\langle}
\def\>{\rangle}
\newcommand{\comment}[1]{}
\begin{document}

\title{A three-player coherent state embezzlement game}
\author{Zhengfeng Ji}
\affiliation{Centre for Quantum Software and Information, University of Technology Sydney, Australia}
\author{Debbie Leung}
\affiliation{University of Waterloo and the Perimeter Institute, Canada.  Email: \texttt{wcleung@uwaterloo.ca}}
\author{Thomas Vidick}
\affiliation{California Institute of Technology, USA. Email: \texttt{vidick@cms.caltech.edu}}
\date{}
\maketitle

\begin{abstract}
We introduce a three-player nonlocal game, with a finite number of classical questions and answers, such that the optimal success probability of $1$ in the game can only be achieved in the limit of strategies using arbitrarily high-dimensional entangled states. Precisely, there exists a constant $0 <c\leq 1$ such that to succeed with probability $1-\eps$ in the game it is necessary to use an entangled state of at least $\Omega(\eps^{-c})$ qubits, and it is sufficient to use a state of at most $O(\eps^{-1})$ qubits.

The game is based on the coherent state exchange game of Leung et al.\ (CJTCS 2013). In our game, the task of the quantum verifier is delegated to a third player by a classical referee. Our results complement those of Slofstra (arXiv:1703.08618) and Dykema et al.\ (arXiv:1709.05032), who obtained two-player games with similar (though quantitatively weaker) properties based on the representation theory of finitely presented groups and $C^*$-algebras respectively.
\end{abstract}

A nonlocal game \cite{cleve2004consequences} consists of a one-round
interaction between a trusted referee and multiple (say, $k \geq 2$)
spatially isolated players.  In the game, the referee draws a query,
which is a $k$-tuple of questions, according to a publicly known
distribution and sends one question to each player.  Upon receiving
its question, each player sends an answer back to the referee.  The
referee decides if the players win or not by checking if the $k$-tuple
of answers satisfies a publicly specified criterion depending only on
the selected query and the answers.  It is always assumed that the
players cooperate to maximize their probability of winning the game.
The players may be allowed to share different types of nonlocal
resources, such as shared randomness or entanglement, before the game
starts, but they cannot communicate during the game.  Despite the fact
that the interaction with the referee is entirely classical, there are
nonlocal games in which the players can win with strictly higher
probability by sharing entanglement as opposed to only classical
shared randomness~\cite{Bell:64a}.

Nonlocal games play a role in many areas and have been studied under
multiple guises.  First, any nonlocal game is a Bell
experiment, where each question is a measurement setting, each answer
is a measurement outcome, and an upper bound on the maximum
probability of winning using shared randomness alone is a Bell
inequality.  This leads to a theoretical framework to ``test
quantumness''~\cite{Bell:64a} that has been experimentally
demonstrated with great success, from pioneering work
in~\cite{aspect1981experimental} to recent loophole-free
violations~\cite{hensen2015loophole,giustina2015significant,shalm2015strong}.
Second, nonlocal games have been actively studied in complexity theory
in relation to the study of interactive proof systems.  Third, some
nonlocal games have a useful \emph{rigidity} property that implies
that there is an essentially unique optimal winning strategy.  Thus
correlations in the answers can be used as a certificate for the
entangled state shared by the parties as well as the measurements
applied.  Such results are very useful in quantum cryptography to
achieve tasks such as device independent quantum key distribution or
verified delegated quantum computation~\cite{reichardt2013classical}.  

An outstanding question in the theory of nonlocal games is the quantification of the amount of entanglement required to achieve, or even approach, optimality. For a long time, there was no explicit nonlocal game known for which any optimal strategy provably required more than one, or at most two, qubits of entanglement per player. More recently, a number of examples of nonlocal games requiring a large amount of entanglement have been found, see for example \cite{BBT11,slofstra2011lower,ji2013binary,manvcinska2014unbounded,chao2016test,coladangelo2017separation}.  However, these games all require an increasing number of questions or answers.  In \cite{pal2010maximal} the authors identified a Bell inequality, the so-called $I_{3322}$ inequality, that can also be formulated as a two-player game with 3 possible questions and 2 possible answers per player, and gave strong numerical evidence that the optimal violation of the inequality (equivalently, the maximum success probability of players sharing entanglement in the associated game) could only be reached in the limit of arbitrarily high-dimensional entanglement. However, an analytical proof of this fact has remained elusive.

In Leung et al.~\cite{leung2013coherent}, motivated by the discovery of ``embezzling states''~\cite{van2003universal} and to gain insights in the amount of entanglement required of optimal strategies in multi-prover interactive proof systems, the authors introduced a game called the ``coherent state exchange game'' in which each player receives a $3$-dimensional system and returns a qubit.  They showed that an optimal success probability of $1$ in this game could only be achieved in the limit of strategies using entangled states of arbitrarily large dimension; moreover, they provided precise trade-offs between success probability and dimension.
The intuition for the game is simple: the players are tasked with coherently transforming a product state to an EPR pair. A simple application of Fannes' inequality~\cite{fannes1973continuity} shows that this can only be accomplished by using an arbitrarily large ``reservoir'' of entanglement. Such ``universal reservoirs'', the embezzlement states introduced in~\cite{van2003universal}, can be used to instantiate arbitrarily close to perfect strategies for the players. 

The game considered in~\cite{leung2013coherent} is not a nonlocal game in the orthodox sense of the term: in the game the referee is required to prepare a (small) entangled state, and exchange quantum states with the players. Was this a ``cheat'' that enabled the result, or a hint that a similar property should be achievable with nonlocal games with a classical referee? In~\cite{regev2015quantum} a step was taken in this direction by adapting the game to one in which questions remain quantum, but answers from the player are classical. \tnote{Added:}In their game, the players are asked to locally distinguish the two states 
\[ \ket{\psi_0} = \frac{1}{\sqrt{2}}\ket{0}\ket{0} + \frac{1}{\sqrt{2}}\ket{\EPR} \quad\text{and}\quad \ket{\psi_1}= \frac{1}{\sqrt{2}}\ket{0}\ket{0} - \frac{1}{\sqrt{2}}\ket{\EPR} \;,\]
where $\ket{\EPR}$ denotes an EPR pair, by each returning a single bit in $\{\pm 1\}$ such that the product of the two bits is $+1$ in case the state sent by the verifier is $\ket{\psi_0}$, and $-1$ in case it is $\ket{\psi_1}$. In~\cite{regev2015quantum} the authors show that this task can only be accomplished perfectly in the limit of infinite-dimensional entanglement. 

A breakthrough came in a sequence of two works by Slofstra~\cite{slofstra2016tsirelson,slofstra2017a}, who introduced completely different techniques, based on the representation theory of finitely presented groups and a ``universal embedding theorem'' to obtain nonlocal games from groups. A consequence of Slofstra's work is the resolution of a decades-old open question on the closure of the set of finite-dimensional quantum correlations, showing that this set is not closed. In particular, there exists a finite game such that the optimum success probability cannot be achieved in any finite dimension, resolving the aforementioned line of questioning in the affirmative.

A different proof for the non-closure of the set of quantum correlations was later obtained by Dykema et al.~\cite{DPP2017}. Although the proof is arguably simpler and more direct (in particular, it yields a two-player game with only $5$ questions per player!), it still relies on rather non-trivial mathematical results establishing the non-existence of non-trivial finite-dimensional representations for certain $C^*$-algebras associated with projections. A drawback of these and Slofstra's methods is that it may not be obvious to formulate the resulting game explicitly, to gain insights on the physical reason why increasing amounts of entanglement can be required to win with higher probabilities, or to obtain good quantitative estimates on the achievable trade-offs between dimension and success probability (though a step in this direction was recently made by making a special case of Slofstra's approach quantitative: see~\cite{slofstra2017b}, on which we comment more below).

\paragraph{Our results.}
In this paper we return to the line of works~\cite{leung2013coherent,regev2015quantum} exploring the properties of quantum embezzlement, and provide a different, arguably more direct and more intuitive construction of a nonlocal game, with classical questions and answers, whose optimal success probability of $1$ can only be achieved in the limit of infinite-dimensional strategies. A benefit is that our construction is fully explicit, and we are able to obtain precise quantitative estimates on the trade-off between dimension and success probability of any strategy. Our analysis shows that any near-optimal
strategy for the game we construct must contain, within itself, the ability to ``embezzle'' an EPR
pair from a product state -- a task that, according to Fannes'
inequality, can only be achieved with arbitrarily high accuracy using a
family of ancilla entangled states that have unbounded entanglement
entropy. The impossibility of perfect embezzlement using
finite-dimensional entanglement thus provides a natural physical basis
for the fact that the optimal success probability of $1$ in our game can only be achieved in the limit of infinite-dimensional strategies.

As already mentioned, our starting point is the two-player
embezzlement game~\cite{leung2013coherent}.
We modify the two-player game with quantum referee into a
three-player game with classical referee by
turning the quantum referee in the original game into a
third player in the new game.  The classical referee
in the new game
classically
``delegates'' to the third player
the preparation of the quantum referee's messages to
the other two players.
The transformation follows a similar spirit as a family of more general
transformations introduced by
Ji~\cite{ji2016classical,ji2017compression}.
It is not clear if the techniques
from~\cite{ji2016classical,ji2017compression} could work here as a
black-box. In addition, even if the constructions proposed in those works did lead to nonlocal games with the desired
properties, the games would have at least four extra players, and the
analysis would be non-trivial.
Here, we give a more direct construction, with a simple analysis, that only
requires a single additional player.

Our game, called $\game$, is described
in Figure \ref{fig:game-n}. It satisfies the properties described in
Theorem \ref{thm:main}.

\begin{theorem}\label{thm:main}
There exists a three-player game with the following properties:
\begin{itemize}
\item There are 12 possible questions to each player. One player replies with 3 bits and the other two each reply with 2 bits.
\item For any $\eps >0$ there is $d=O(\eps^{-1})$ and a strategy for the players that succeeds with probability $1-\eps$ using an entangled state with local dimension $8$ for the first player and $2^{d+3}$ for each of the other two players.
\item There is a constant $c>0$ such that for any $\eps >0$, any strategy for the players that succeeds with probability at least $1-\eps$ in the game must use an entangled state of local dimension at least $2^{\Omega(\eps^{-c})}$ for two of the players.  
\end{itemize}
\end{theorem}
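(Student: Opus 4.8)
The plan is to reduce any strategy succeeding with probability $1-\eps$ in $\game$ to a two-player \emph{coherent embezzlement} procedure run by players $1$ and $2$, and then to lower bound the local dimension of any such procedure by an entropic (Fannes-type) argument. Since the constant $c$ may be small, I only need to track the reduction losses up to a fixed polynomial in $\eps$.

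First I would isolate the role of each sub-test. Because the classical referee delegates the quantum verifier's state preparation and final check to player $3$, the game necessarily contains self-testing components — such as the $\GHZ$, $\pauli$, $\stab$, and $\tom$ sub-tests — whose purpose is to certify that the three players apply the intended Pauli/stabilizer measurements up to a local isometry. Passing the game with probability $1-\eps$ forces each sub-test to be passed with probability $1-O(\eps)$ (up to the inverse sampling weight of that test). Invoking the robust rigidity guarantees of these self-tests, I would produce local isometries $V_1,V_2,V_3$ under which the shared state and the measurement operators are $O(\poly(\eps))$-close, in the relevant state-dependent norm, to the ideal operators that the honest strategy uses to simulate the quantum referee.

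Using this structural characterization I would extract the embezzlement sub-routine. In the honest strategy, conditioned on the control register held by player $3$ being $\ket{1}$, players $1$ and $2$ must coherently transform their product input into an $\EPR$ pair while returning a shared ancilla (the ``reservoir'' $\ket{\psi}$ of local dimension $D$) essentially unchanged; conditioned on $\ket{0}$ they must leave the input untouched. Since the rigidity step has pinned down player $3$'s control measurement and players $1,2$'s operations up to isometry, from any $(1-\eps)$-strategy I can read off local maps $(V_1,V_2)$ realizing this controlled product-to-$\EPR$ transformation with error $\delta=O(\eps^{c})$ for some constant $c>0$. Local isometries preserve entanglement across the $1\!:\!2$ cut, so the $\ket{0}$ branch forces the reservoir marginal to return $O(\delta)$-close to that of $\ket{\psi}$, while the $\ket{1}$ branch creates one ebit on top of a reservoir state $\ket{\psi'}$; coherence ties the two branches together so that $\ket{\psi'}$ is $O(\delta)$-close to $\ket{\psi}$. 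By Fannes' inequality the corresponding marginal entropies differ by at most $O(\delta)\log D+h(\delta)$, yet isometry-invariance of entanglement together with the newly created ebit forces them to differ by approximately $1$. Combining gives $1\le O(\delta)\log D+O(h(\delta))$, hence $\log D=\Omega(1/\delta)=\Omega(\eps^{-c})$, so the local dimension of players $1$ and $2$ is at least $2^{\Omega(\eps^{-c})}$.

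The main obstacle is the rigidity step: cleanly composing the robustness bounds of several self-tests so that player $3$'s simulated verifier and players $1,2$'s coherent operations are certified \emph{simultaneously}, and then propagating the accumulated error into a single clean bound $\delta=O(\eps^{c})$ on the embezzlement error. The exponent $c$ in the final statement is precisely what is lost in this error propagation, which is why the lower bound $2^{\Omega(\eps^{-c})}$ is quantitatively weaker than the matching $O(\eps^{-1})$ scaling of $\log D$ in the achievability bound.
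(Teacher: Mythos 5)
Your overall strategy for the lower bound coincides with the paper's: use rigidity of the self-testing sub-games to pin down the players' measurements up to local isometry, read off from any $(1-\eps)$-strategy a two-player product-to-$\EPR$ transformation that approximately preserves a shared reservoir, and invoke a Fannes-type bound (the paper packages this as Fact~\ref{fact:emb}, applied with $S=1$) to force $\log D=\Omega(\eps^{-c})$. So the skeleton is right. The genuine gap is that the step you defer as ``the main obstacle'' --- composing the rigidity guarantees so that $P_V$'s simulated-verifier measurement and $P_A,P_B$'s coherent operations are certified \emph{simultaneously} --- is not a routine error-propagation exercise; it is the actual content of the soundness proof, and it requires specific design choices in the game that your sketch does not supply. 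Concretely, the paper arranges the question flags so that $P_A,P_B$ cannot distinguish parts (a), (b), (c) (all have $\pi=0$), hence the $\pauli$ rigidity established in part (a) automatically constrains their measurements in (b) and (c); parts (b) and (c) are then purely classical parity checks which, given that constraint, certify that $P_V$'s extra measurement is (up to isometry) the controlled-Hadamard observable $\sigma_{z,\reg{V}_1}\otimes\proj{0}_{\reg{V}_2}+\sigma_{x,\reg{V}_1}\otimes\proj{1}_{\reg{V}_2}$ and that his $v_2$ answer is $\sigma_{z,\reg{V}_2}$; finally $P_V$ cannot distinguish (b), (c) from (d), so his certified behavior carries into part (d), where conditioning on $u=0$ yields a state close to $\ket{00}\ket{00}+\ket{\EPR}\ket{11}$ and the $\pauli$ check on the second $\GHZ$ copy forces $P_A,P_B$'s $X$ observables to implement the embezzlement. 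Without exhibiting cross-checks of this kind, ``read off local maps realizing the controlled transformation with error $O(\eps^c)$'' is an assertion, not a proof. (Also, the game contains no $\stab$ or $\tom$ sub-tests; the only self-test used is $\pauli$, which itself combines $\GHZ$ and Magic Square rigidity.)

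Two further points. You do not address the theorem's second bullet at all: one must exhibit, for each $d$, an explicit strategy (the paper uses three $\GHZ$ states together with the embezzlement state $\ket{\Gamma_d}$ on $d+d$ qubits and controlled left-shift unitaries) that wins parts (a)--(c) perfectly and part (d) with probability $1-O(1/d)$. This is the easier half, but it is part of the statement. Finally, a minor quantitative remark: your Fannes step conflates overlap error with trace distance; the correct accounting gives $\log D=\Omega(\delta^{-1/2})$ rather than $\Omega(\delta^{-1})$, which is harmless here since the theorem only claims some constant $c>0$.
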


Our game is smaller than the game from~\cite{slofstra2017b}, but larger than the one from~\cite{DPP2017}; in addition, it requires three players, instead of two for both of these results. Quantitatively, the trade-off between dimension and success probability we obtain is exponentially stronger than the one obtained in~\cite{slofstra2017b}. (An exponential trade-off of the kind we obtain has long been known for \emph{families} of games, but of course the point of our result is that the trade-off is demonstrated for a single, finite game.)

\paragraph{Subsequent work.}  
Subsequent to the initial submission of the current paper to the arXiv,
several related results have been reported. 

Slofstra \cite{slofstra2018} found a two-player nonlocal game such
that playing the game with success probability $1-\epsilon$ requires
the players to use local Hilbert spaces of dimension at least $c'
\exp(c/\epsilon^\alpha)$ for some positive constants $c,c',\alpha$. 

Coladangelo and Stark \cite{coladangelostark2018} constructed a
correlation for $2$ players in which one player receives $4$ questions
and the other receives $5$ questions, and each has $3$ answers to
choose from.
The correlation cannot be attained in finite dimensions, but there is
a Cauchy sequence of finite dimensional entangled states that
enables the correlation to be approximated to arbitrary precision.
The proofs in \cite{coladangelostark2018} are elementary; however, the
authors do not provide quantitative bounds for the entanglement
dimension required by a certain approximation.

Furthermore, inspired by the approach in this paper to turn the
two-player embezzlement game in \cite{leung2013coherent} to a nonlocal
game, Coladangelo \cite{coladangelo2019} obtained an expliciit Bell
inequality for two players, that can be rewritten as a nonlocal game
with 6 questions and 3 answers for each player, with a similar
trade-off between the probability of winning and entanglement
dimension.  This addresses our open problem in an earlier draft
whether a two-player nonlocal game can be obtained in the current
approach with exponential scaling.  The games in this paper and in
\cite{leung2013coherent,coladangelo2019} feature three different
mechanisms to enforce the coherent transformation of a product state to
an EPR pair.

More recently, Fu~\cite{Fu2019}, building on the embedding procedure developed in \cite{slofstra2017a}, found a family of constant-sized correlations that can only 
be approximated up to within sufficiently small error if the parties share 
maximally entangled states with increasing dimension.

\paragraph{Discussion.} It remains an outstanding open question to determine the size of the smallest game such that the optimal success probability in the game can only be achieved in the limit of infinite-dimensional strategies. All games for which such a result has been shown so far have quantum value $1$ (also called ``pseudo-telepathy'' games), whereas in the case of the $I_{3322}$ inequality, the (still conjectural) separation is for a game with quantum value strictly less than $1$. It is interesting to explore what features of entanglement cannot be demonstrated in pseudo-telepathy games.

Due to the fact that optimal strategies for the players in our game are required to perform coherent state embezzlement, the results of~\cite{cleve2017perfect} imply that there is no perfect infinite-dimensional strategy in the tensor product model, but there is one in the commuting-operator model.\footnote{We thank Laura Man\v{c}inska for pointing out this consequence.} As a consequence our game is not a candidate for separating the sets $C_q$ and $C_{qs}$ of correlations achievable using finite-dimensional and infinite-dimensional strategies in the tensor product model respectively; this separation, sometimes referred to as ``Tsirelson's problem'', was subsequently proved in \cite{coladangelostark2018}. 

There are reasons to believe that the exponential trade-off between entanglement dimension and success probability demonstrated by our construction may be optimal. Indeed, even if one allows games whose size grows with $\eps^{-1}$ (equivalently, if one restricts to ``not too small'' values of $\eps$), the best scaling known remains exponential (see e.g.~\cite{ostrev2016entanglement} for the best known in the case of XOR games, or the aforementioned recent work of Slofstra~\cite{slofstra2017a} that provides a constant-sized game with such scaling). However, no upper bounds are known.

\paragraph{Organization.}
The construction of the game, and its analysis, combines known rigidity results for the $\GHZ$ game and the Magic Square games. These games are described in Section~\ref{sec:tests} and combined in Section~\ref{sec:pauli}. In Section~\ref{sec:game} we introduce the game $\game$, give intuition for the construction, and prove Theorem~\ref{thm:main}.

\section{Preliminaries}

\subsection{Notation}

$\mH$ denotes a finite-dimensional Hilbert space, and $\Lin(\mH)$ the linear operators on $\mH$. We use indices $\mH_\reg{A}$, $\mH_{\reg{B}}$, etc., to index different spaces.
We write
\begin{equation}\label{eq:pauli-matrix}
\sigma_i = \begin{pmatrix} 1 & 0 \\ 0 & 1 \end{pmatrix},\qquad \sigma_x = \begin{pmatrix} 0 & 1 \\ 1 & 0 \end{pmatrix},\qquad \sigma_y = \begin{pmatrix} 0 & -i \\ i & 0 \end{pmatrix}\qquad\text{and}\qquad \sigma_z = \begin{pmatrix} 1 & 0 \\ 0 & -1\end{pmatrix}
\end{equation}
 for the standard single-qubit Pauli observables on $\C^2$. We sometimes use an additional subscript, $\sigma_{w,\reg{R}}$ for $w\in\{i,x,y,z\}$, to clarify the space on which a Pauli operator acts: $\sigma_{w,\reg{R}}$ acts on $\mH_{\reg{R}} \simeq (\C^2)_{\reg{R}}$. We write $\ket{\EPR} = \frac{1}{\sqrt{2}}\ket{00} + \frac{1}{\sqrt{2}} \ket{11}$ for the EPR pair and $\ket{\GHZ} = \frac{1}{\sqrt{2}} \ket{000} + \frac{1}{\sqrt{2}} \ket{111}$ for the $3$-qubit $\GHZ$ state.

We use the following useful piece of notation:

\begin{definition}\label{def:isometry}
For finite-dimensional Hilbert spaces $\mH_{\reg{A}}$, $\mH_\reg{B}$ and $\mH_{\reg{A}'}$, $\delta>0$, and operators $R \in\Lin(\mH_{\reg{A}})$ and $S\in\Lin(\mH_{\reg{A}'})$ we say that $R$ and $S$ are $\delta$-isometric with respect to $\ket{\psi} \in \mH_{\reg{A}} \otimes \mH_{\reg{B}}$, and write $R\simeq_\delta S$, if there exists an isometry $V:\mH_{\reg{A}}\to\mH_{\reg{A}'}$ such that
$$\big\|( R-V^\dagger SV)\otimes I_{\reg{B}} \ket{\psi}\big\|=O(\delta).$$
If $V$ is the identity, then we further say that $R$ and $S$ are $\delta$-equivalent, and write $R\approx_\delta S$ for $\| ( R- S) \otimes I_{\reg{B}} \ket{\psi}\|=O(\delta)$.

Analogously, for a state $\ket{\phi}$ on $\mH_{A'} \otimes \mH_{B'}$ we write $\ket{\psi} \simeq_\delta \ket{\phi}$ when there exists isometries $V_A:\mH_{\reg{A}}\to\mH_{\reg{A}'}$ and $V_B:\mH_{\reg{B}}\to\mH_{\reg{B}'}$ such that $\|\ket{\phi} - V_A \otimes V_B \ket{\psi}\| = O(\delta)$, and $\ket{\psi} \approx_\delta \ket{\phi}$ whenever $V_A$ and $V_B$ are the identity.
\end{definition}

The notation $R\simeq_\delta S$ carries some ambiguity, as it does not specify the state $\ket{\psi}$. The latter should always be clear from the context: we will often simply write that $R$ and $S$ are $\delta$-isometric, without explicitly specifying $\ket{\psi}$ or the isometry. The relation is transitive, but not reflexive: the operator on the right will always act on a space of dimension at least as large as that on which the operator on the left acts. The notion of $\delta$-equivalence is both transitive and reflexive, and we will use it as a measure of distance on linear operators.

\subsection{Elementary tests}
\label{sec:tests}

We use the language of tests to describe elementary building blocks used in the construction of our game. A test is a protocol describing an interaction between a trusted verifier and multiple untrusted players. In the test, the verifier selects a question for each player, according to a publicly known distribution. The (ordered) tuple of questions selected by the verifier is called a query. Upon receiving its question, each player has to provide an answer to the verifier. Finally, the verifier decides to accept (in which case we say that the players pass the test) or reject (the players fail), by evaluating a publicly known predicate on the query and the tuple of answers.

We recall two well-known tests. The first is a test such that any players that pass the test with probability close to $1$ must use a shared entangled state that is isometric to a $\GHZ$ state (we say the test ``self-tests'' the $\GHZ$ state). The second is the Magic Square game, which self-tests two EPR pairs, as well as Pauli $\sigma_x$ and $\sigma_z$ measurements on that state.

\begin{theorem}[GHZ test, Proposition 4 in~\cite{miller2012optimal}]\label{thm:ghz}
There exists a three-player test
$\GHZ$
with the following properties.
\begin{enumerate}
\item The marginal distribution on questions to each player is uniform over $\{x,y\}$;
\item Each player replies with a single bit in $\{\pm 1\}$;
\item For any pair of anti-commuting binary observables $X,Y$ on $\mH_i$, for each player $i\in\{1,2,3\}$,
 there is a strategy for the players that succeeds with probability $1$ and only requires that each player measures her share of an eigenvalue-$1$ eigenstate of the operator
\begin{equation}\label{eq:ghz-op}
G(X,Y)\,=\,\frac{1}{4}\big(X\otimes X \otimes X - Y\otimes Y \otimes X - X \otimes Y \otimes Y - Y \otimes X \otimes Y\big)
\end{equation}
in $\Lin(\mH_1\otimes\mH_2\otimes\mH_3)$ using the binary observable indicated by her question;
\item For any $\eps\geq 0$ there is $\delta = O(\sqrt{\eps})$ such that for any strategy with success probability at least $1-\eps$, there are local isometries on $\mH_i$, $i\in\{1,2,3\}$, such that, under the isometries, $\ket{\psi} \simeq_\delta \ket{\GHZ}_{123}\ket{\psi'}$, for some state $\ket{\psi'}$, and a player's observable $W$ on question $w\in\{x,y\}$ satisfies $W\simeq_\delta \sigma_{w}$, the Pauli observable acting on the player's share of $\ket{\GHZ}_{123}$.
\end{enumerate}
\end{theorem}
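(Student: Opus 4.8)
The plan is to prove the statement as a robust self-test based on Mermin's $\GHZ$ argument. Write $A_w,B_w,C_w$ for the binary ($\pm1$) observable used by players $1,2,3$ on question $w\in\{x,y\}$; each is a Hermitian involution, and operators of distinct players commute. The test draws a query uniformly from $\{xxx,\,yyx,\,xyy,\,yxy\}$ and accepts iff the product of the three reported bits equals $+1,-1,-1,-1$ respectively; this makes the marginal on each player's question uniform on $\{x,y\}$, giving items 1 and 2. For item 3, let player $i$ plug her anti-commuting pair $(X_i,Y_i)$ into $G(X,Y)$ and have the players share a global eigenvalue-$1$ eigenstate of $G$. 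A direct check shows the four operators $X_1X_2X_3$, $Y_1Y_2X_3$, $X_1Y_2Y_3$, $Y_1X_2Y_3$ pairwise commute (any two differ in exactly two tensor slots, giving an even number of sign flips) and square to the identity, so they share an eigenbasis, and the target signs $(+,-,-,-)$ are jointly consistent, so the eigenvalue-$1$ eigenspace of $G$ is nonempty. On any such eigenstate the product of the three single-player outcomes is deterministic and equals the target sign on every query, so the players pass with certainty; in the canonical case $X_i=\sigma_x$, $Y_i=\sigma_y$ the eigenstate is $\ket{\GHZ}$.

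For the robustness claim (item 4), I would first record the identity
\[
\Pr[\text{pass}] \;=\; \tfrac12\Big(1 + \langle\psi|\,\tilde G\,|\psi\rangle\Big),\qquad \tilde G=\tfrac14\big(A_xB_xC_x - A_yB_yC_x - A_xB_yC_y - A_yB_xC_y\big),
\]
obtained by writing each conditional acceptance probability as $\tfrac12(1\pm\langle\text{product}\rangle)$. Success probability at least $1-\eps$ then forces $\langle\psi|\tilde G|\psi\rangle\geq 1-2\eps$. Since each summand $M$ is a commuting product of $\pm1$ observables, hence a Hermitian involution with $\|M\|\leq 1$, the signed average being at least $1-2\eps$ forces each term to satisfy $\langle s\,M\rangle\geq 1-O(\eps)$ (with $s\in\{\pm1\}$ its ideal sign), and then $\|(M-s)\ket\psi\|^2 = 2-2\langle sM\rangle = O(\eps)$ upgrades this to the four approximate stabilizer relations $A_xB_xC_x\ket\psi\approx\ket\psi$, $A_yB_yC_x\ket\psi\approx-\ket\psi$, $A_xB_yC_y\ket\psi\approx-\ket\psi$, $A_yB_xC_y\ket\psi\approx-\ket\psi$, where throughout $\approx$ denotes equality of state vectors up to $O(\sqrt\eps)$ in norm. (This square-root loss is the standard passage from an expectation bound to a state-vector bound.)

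The heart of the argument is to combine these four relations into approximate anti-commutation on each player. Multiplying the first two relations and using $C_x^2=I$ together with the commutation of distinct players gives $(A_xA_y)(B_xB_y)\ket\psi\approx-\ket\psi$, and the last two give $(A_xA_y)(B_yB_x)\ket\psi\approx\ket\psi$; adding and left-multiplying by the unitary $A_yA_x$ yields $(B_xB_y+B_yB_x)\ket\psi\approx 0$, i.e.\ $B_x,B_y$ approximately anti-commute on $\ket\psi$. Players $1$ and $2$ enter the four relations symmetrically, so the same computation gives $(A_xA_y+A_yA_x)\ket\psi\approx 0$; for player $3$ the products $A_xB_xC_x\cdot A_yB_xC_y$ and $A_xB_yC_y\cdot A_yB_yC_x$ isolate $(A_xA_y)(C_xC_y)\ket\psi\approx-\ket\psi$ and $(A_xA_y)(C_yC_x)\ket\psi\approx\ket\psi$, giving $(C_xC_y+C_yC_x)\ket\psi\approx 0$. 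Feeding each approximately anti-commuting pair of binary observables into the standard qubit-extraction (``swap'') isometry produces local isometries $V_i$ under which each $A_w,B_w,C_w$ is $O(\sqrt\eps)$-equivalent to $\sigma_w$ on an extracted qubit; substituting back into the four stabilizer relations shows the three extracted qubits hold a state $O(\sqrt\eps)$-close to the unique joint $(+,-,-,-)$ eigenstate of the $\GHZ$ stabilizer group, namely $\ket{\GHZ}_{123}$, with the leftover registers in some $\ket{\psi'}$. Tracking constants gives $\delta=O(\sqrt\eps)$.

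I expect the main obstacle to be the error bookkeeping in the final step: propagating the state-dependent errors through the operator products so that transitivity costs only constant factors, and invoking a robust version of the lemma that turns an approximately anti-commuting pair of involutions into Pauli operators on an extracted qubit with the correct $O(\sqrt\eps)$ dependence. Since the statement is quoted as Proposition~4 of~\cite{miller2012optimal}, I would ultimately appeal to that self-testing machinery for the extraction step and concentrate the original work on the value identity and the three anti-commutation relations.
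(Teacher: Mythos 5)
The paper does not prove this theorem: it is imported wholesale as Proposition~4 of the cited reference (Miller--Shi), and the only thing the paper adds is the explicit description of the game (queries uniform over $(x,x,x),(y,y,x),(y,x,y),(x,y,y)$ with target parities $+1,-1,-1,-1$), which matches your test exactly. Your reconstruction is the standard Mermin-based self-testing argument and its steps check out: the value identity $\Pr[\mathrm{pass}]=\tfrac12(1+\langle\psi|\tilde G|\psi\rangle)$ is correct for this query distribution; the four commuting involutions multiply to $-I$ so the sign pattern $(+,-,-,-)$ is consistent and the eigenvalue-$1$ eigenspace of $G$ is nonempty (giving item~3); and pairing the stabilizer relations two at a time, cancelling the third player's observable via $C_x^2=I$ and left-multiplying by the unitary $A_yA_x$, does yield $O(\sqrt{\eps})$ approximate anti-commutation for each player. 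The passage from $XXX,-YYX,-XYY$ to the standard generators $XXX,ZZI,IZZ$ confirms uniqueness of the joint eigenstate. The one step you defer --- converting a state-dependent approximately anti-commuting pair of involutions into an isometry extracting a qubit with only constant-factor loss --- is genuinely where the technical work lives (and where the $\sigma_y$ versus $-\sigma_y$ complex-conjugation ambiguity is absorbed into the choice of local isometry), but deferring it to the cited self-testing machinery is legitimate and is effectively what the paper itself does for the entire theorem.
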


One realization of a game satisfying the above theorem can be stated
as follows.  The referee draws a random {query} from the set
$\{(x,x,x),(y,y,x),(y,x,y),(x,y,y)\}$ and sends the $i$-th symbol
in the query to the $i$-th player as her {question}.
The players win if their
answers multiply to $1$ if the query is $(x,x,x)$ and $-1$ otherwise.

In the $\GHZ$ game, we work with the $\GHZ$ state and observables
$\sigma_x$ and $\sigma_y$.
Yet, when designing the main nonlocal game introduced later, it will
be important for us to work with the $\GHZ$ state, and simultaneously have access to
$\sigma_x$ and $\sigma_z$ observables by rigidity.
For this reason, in Section~\ref{sec:pauli} we introduce a game that requires two $\GHZ$ states
and uses a construction motivated by the Magic Square game $\MS$.
In the following theorem, we recall some properties of the Magic Square
game, a two-player game that self-tests two EPR pairs.

\vspace*{1ex}


\begin{theorem}[Magic Square test, Theorem 5.9 in~\cite{coladangelo2017robust}]\label{thm:ms-rigid}
There exists a two-player test $\MS$ with the following properties:
\begin{enumerate}
\item Queries $(q_1,q_2)$ in the game, where for $i\in\{1,2\}$ question $q_i$ goes to
  the $i$-th player, are drawn from $\mQ \times \mQ$, where
\begin{equation}\label{eq:def-e-p}
  \mQ \,= \,\{c_1,c_2,c_3,r_1,r_2,r_3\}.
\end{equation}
The marginal distribution on questions to each player is uniform over $\mQ$;
\item Each player replies with $2$ bits in $\{\pm 1\}^2$;
\item For each player labeled by $i\in\{1,2\}$, let $\mH_i$ denote the Hilbert space associated with player $i$'s local system.  For any two commuting pairs of anti-commuting binary observables $(X^i_1,Z^i_1)$ and $(X^i_2,Z^i_2)$ acting on $\mH_i$, there is a strategy for the players that succeeds with probability $1$ and only requires the measurement of observables obtained as the product of $X^i_1,Z^i_1,X^i_2,Z^i_2$ on an eigenvalue-$1$ eigenstate of the operator
$$\MS(X,Z)\,=\,\frac{1}{2}\big(X^1_1\otimes X^2_1 + Z^1_1\otimes Z^2_1\big)\cdot \frac{1}{2}\big(X^1_2\otimes X^2_2 + Z^1_2\otimes Z^2_2\big)\,\in\,\Lin(\mH_1\otimes \mH_2)\;;\footnote{In the definition of $\MS(X,Z)$, for each $i$, for both $j=1,2$, the operators $X^i_j, Z^i_j$ act on $H_1 \otimes H_2$, but only nontrivially on $H_i$.  We do not write down the identity operator on $H_{\{1,2\}\setminus\{i\}}$ explicitly.  Both $X^1_1\otimes X^2_1 + Z^1_1\otimes Z^2_1$ and $X^1_2\otimes X^2_2 + Z^1_2\otimes Z^2_2$ are operators in $\Lin(\mH_1 \otimes \mH_2)$, and $\cdot$ denotes their product.  
The subscript $j$ can be interpreted as a label for systems within each player's local Hilbert space in the honest strategy.  
}$$
\item For any $\eps\geq 0$ there is a $\delta = O(\sqrt{\eps})$ such that
  for any strategy with success probability at least $1-\eps$, there
  are local isometries on $\mH_i$, $i\in\{1,2\}$, such that, under the
  isometries, $\ket{\psi} \simeq_\delta \ket{\EPR}_{\reg{A}_1\reg{B}_1}\otimes
  \ket{\EPR}_{\reg{A}_2\reg{B}_2}\otimes\ket{\psi'}$, for some state
  $\ket{\psi'}$.
  In addition, let $X_1$ and $X_2$ (resp.
  $Z_2$ and $Z_1$) be the binary observables associated with a
  player's first and second answer bits on question $r_1\in\mQ$ (resp.
  $r_2\in\mQ$).
  Then for $j \in \{1,2\}$ and $w\in\{x,z\}$, $W_j \simeq_\delta \sigma_{w,j}$, where
  $\sigma_{w,j}$ is the Pauli $\sigma_w$ observable acting on the player's
  $j$-th qubit.
  Similar approximations hold for questions $c_1$ and $c_2$, with the
  associated observables being close to $\sigma_{x,1}$ and $\sigma_{z,2}$, and
  $\sigma_{x,2}$ and $\sigma_{z,1}$ respectively.
\end{enumerate}
\end{theorem}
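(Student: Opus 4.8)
The statement is the standard robust self-test for the Mermin--Peres magic square, and the plan is to treat completeness (item 3) and soundness (item 4) separately.

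For completeness I would first make the $3\times 3$ grid of observables explicit. Using the two commuting anti-commuting pairs $(X^i_1,Z^i_1)$ and $(X^i_2,Z^i_2)$ on each player's space, I define the nine context observables as products of $X^i_1,Z^i_1,X^i_2,Z^i_2$ in the labeling fixed by item 4: player $i$'s answers to $r_1$ report $X^i_1$ and $X^i_2$, to $r_2$ report $Z^i_2$ and $Z^i_1$, the row $c_1$ reports $X^i_1$ and $Z^i_2$, $c_2$ reports $X^i_2$ and $Z^i_1$, the ``product'' cells report the corresponding products, and the anomalous cell reports $Y^i_1Y^i_2$ with $Y^i_j:=iX^i_jZ^i_j$. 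Two facts then need checking: within each row or column the three observables pairwise commute and multiply to $\pm I$ (the magic-square algebra, verified directly from the (anti)commutation relations), so the parity predicate holds as an operator identity on any state; and the two players' outcomes agree on shared cells. The second holds because $\MS(X,Z)$ is a product of two commuting terms $\tfrac12(X^1_j\otimes X^2_j+Z^1_j\otimes Z^2_j)$, $j\in\{1,2\}$, each of which has eigenvalue $1$ precisely on a state that is maximally entangled across the $j$-th qubits; hence an eigenvalue-$1$ eigenvector of $\MS(X,Z)$ is $\ket{\EPR}_{\reg{A}_1\reg{B}_1}\otimes\ket{\EPR}_{\reg{A}_2\reg{B}_2}\otimes\ket{\psi'}$, on which the Pauli measurements are perfectly (anti)correlated across players and the predicate is satisfied with probability $1$.

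For soundness --- the substantive part --- I would run the usual self-testing pipeline. Write $A_{q,1},A_{q,2}$ and $B_{q,1},B_{q,2}$ for the binary observables of the two players on question $q\in\mQ$, with $A_{q,3}=A_{q,1}A_{q,2}$ the implied third entry. Success probability $1-\eps$ forces every shared-cell consistency check and every parity check to be violated with probability $O(\eps)$, which I translate into state-dependent operator relations: cell observables common to a row and a column are $\delta$-equivalent across the two players, and each row/column product equals $\pm I$ up to $O(\delta)$. Dropping the player superscript and writing $X_1,Z_1,X_2,Z_2$ for one player's extracted cell observables, these parity and consistency relations algebraically force the magic-square group relations; made robust, they give approximate anti-commutation of the two cross-context cells on each qubit, e.g. $\|(X_1Z_1+Z_1X_1)\ket\psi\|=O(\sqrt\eps)$, together with approximate commutation of the index-$1$ pair with the index-$2$ pair. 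The square root incurred in the squaring/Cauchy--Schwarz steps is the source of $\delta=O(\sqrt\eps)$.

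These relations say that on each player's space the four operators $X_1,Z_1,X_2,Z_2$ approximately generate the two-qubit Pauli group on $\ket\psi$. I would then apply the standard local swap isometry $V_i$ built from these operators, which by the approximate-representation (Gowers--Hatami) machinery satisfies $V_i X^i_j V_i^\dagger\approx\sigma_{x,j}$ and $V_i Z^i_j V_i^\dagger\approx\sigma_{z,j}$ up to $O(\delta)$ in state-dependent norm, and under $V_1\otimes V_2$ maps $\ket\psi$ to $\ket{\EPR}_{\reg{A}_1\reg{B}_1}\otimes\ket{\EPR}_{\reg{A}_2\reg{B}_2}\otimes\ket{\psi'}$; the cross-player consistency relations pin the alignment so that the two single-player marginals glue into genuine EPR pairs rather than EPR pairs up to a local unitary. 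Reading off the images of the $r_1,r_2,c_1,c_2$ observables then yields exactly the claimed $W_j\simeq_\delta\sigma_{w,j}$. I expect the main obstacle to be the quantitative robustness: propagating the $O(\sqrt\eps)$ error through the approximate anti-commutation estimates and, more delicately, through the isometry --- since $\simeq_\delta$ is not reflexive and $V_i$ enlarges the Hilbert space, one must verify the error accumulated over the finitely many contexts stays $O(\sqrt\eps)$. The cleanest route is to isolate the stability lemmas for binary observables (approximate anti-commutation implies the swap isometry is approximately Pauli) and track constants through them; this is precisely the content of the cited Theorem 5.9.
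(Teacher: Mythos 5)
The paper does not prove this statement: Theorem~\ref{thm:ms-rigid} is imported as a black box from Theorem~5.9 of~\cite{coladangelo2017robust}, and the only original content surrounding it in this paper is the reformulation of the game (questions are full rows/columns rather than single cells, the distribution is uniform with automatic acceptance on non-intersecting queries, and each player returns only her first two answer bits since the third is determined by the parity constraint). So there is no in-paper proof to compare against. That said, your outline is a faithful description of how the cited result is actually established: approximate consistency and parity relations from an $(1-\eps)$-winning strategy, approximate anti-commutation of the cross-context cell observables with the $O(\sqrt{\eps})$ loss coming from the Cauchy--Schwarz steps, and then the swap isometry together with Gowers--Hatami-type stability to extract the two EPR pairs and the Pauli observables. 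What you have is a correct roadmap rather than a proof: the two genuinely technical ingredients --- deriving state-dependent anti-commutation with explicit constants from the magic-square relations, and the stability lemma showing the swap isometry is approximately Pauli --- are named but not carried out, which is exactly the content you would be re-deriving from~\cite{coladangelo2017robust}.

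One small inaccuracy in your completeness argument: the eigenvalue-$1$ eigenspace of the product $\MS(X,Z)$ is not exhausted by states maximally entangled in the $\ket{\EPR}$ sense, since the product of the two commuting factors also takes the value $(-1)\cdot(-1)=1$ on the tensor product of two singlet-type eigenvectors, on which the honest strategy would \emph{not} win (the same-row consistency checks anti-correlate). The intended reading, and the one you should verify, is that the honest strategy is run on the simultaneous eigenvalue-$1$ eigenstate of \emph{each} factor $\tfrac12(X^1_j\otimes X^2_j + Z^1_j\otimes Z^2_j)$, which does pin down $\ket{\EPR}_{\reg{A}_1\reg{B}_1}\otimes\ket{\EPR}_{\reg{A}_2\reg{B}_2}\otimes\ket{\psi'}$ up to the local-isometry and complex-conjugation ambiguity.
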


To derive the variant of the Magic Square game used in Theorem~\ref{thm:ms-rigid}, recall the standard formulation for the Magic Square as a matrix
\begin{equation}
  \label{eq:ms}
  \begin{bmatrix}
    xi & ix & xx \\
    iz & zi & zz \\
    xz & zx & yy
  \end{bmatrix}.
\end{equation}

In the formulation of the game from e.g.~\cite{aravind2004quantum}, the first player is sent a question which is a random entry in the matrix, and the second player a question which is a random row or column that contains the first player's question.  The first player replies with one bit and the second player replies with 3 bits.  The referee accepts if the answers are consistent, and the 3 answer bits of the second player multiply to $1$ except if her question is $c_3$ (the column with entries $xx,zz,yy$), in which case the product should be $-1$.  If the players share two $\EPR$ pairs, and measure the observables corresponding to the symbols in their questions (turning $ww'$ into the observable $\sigma_{w} \otimes \sigma_{w'}$) then they always succeed.

We consider the following modifications.
First, the questions we consider are always a complete row or a column, and never a single entry of the magic square.
This allows us to reduce the number of questions without changing the properties
of the game.
In addition, for simplicity we consider the uniform distribution on pairs of questions.
When the query to the players consists of two copies of the same row or column, the
referee checks that both answers from the players match.
If two non-insersecting rows or columns are sampled, then the referee automatically accepts.
Second, it is sufficient for the players to return the first $2$ bits
out of the $3$ she obtains, since the winning condition (known to the
player) forces the 3rd answer bit to be a deterministic function of
the first two.

Note that the soundness analysis (item 4.) only makes claims about the structure of a player's observables associated with the $x$ and $z$ symbols in the questions, not $y$. As explained in e.g.~\cite{reichardt2013classical}, due to the existence of two inequivalent non-trivial irreducible representations of the Pauli group (related by complex conjugation), this is inevitable.

\section{A 3-player rigidity test}
\label{sec:pauli}

In this section we introduce a 3-player test that can only be passed with probability $1$ by players who share two copies of
the $\GHZ$ state, and such that the players' observables associated to a subset of the questions in the test are isometric to $\sigma_x$ and $\sigma_z$ Pauli observables on their respective qubits. We obtain the test by combining the standard
$\GHZ$ test with the $\MS$ (Magic Square) test described in the previous section. The reason for using two $\GHZ$ states is that the Magic Square test requires two EPR pairs to be passed with probability $1$.

We call the resulting test the $\pauli$ test.
In this test, each player is asked to measure the two commuting
two-qubit Pauli operators that are indicated in the first two entries of the row or column of the magic square in
Eq.~\eqref{eq:ms} that she receives as her question, and return the outcomes as her answer.  
These answers are denoted by $a=(a_1,a_2)$, $b=(b_1,b_2)$, and $v=(v_1,v_2) \in \{\pm 1\}^2$ respectively.
The row or column sent to each player is chosen independently and uniformly at random by the referee among the $6$ possibilities.
The referee then checks all possible parity constraints implied by the
stabilizers of two $\GHZ$ states, among those that can be computed from the players' answers.
Suppose those two $\GHZ$ states lie on registers
$\reg{A}_1\reg{B}_1\reg{V}_1$ and
$\reg{A}_2\reg{B}_2\reg{V}_2$.
For example, if the query is $(r_1,r_1,r_1)$,
the measurement outcomes $a_1$, $a_2$
correspond to $xi$, $ix$ on $\reg{A}_1\reg{A}_2$,
$b_1$, $b_2$ correspond to $xi$, $ix$ on $\reg{B}_1\reg{B}_2$,
$v_1$, $v_2$ correspond to $xi$, $ix$ on $\reg{V}_1\reg{V}_2$,
so the referee checks that
$a_1b_1v_1=1$ and $a_2b_2v_2=1$, which corresponds to the stabilizers
$XIXIXI$ and $IXIXIX$ on $\reg{A}_1\reg{A}_2\reg{B}_1\reg{B}_2\reg{V}_1\reg{V}_2$.
If the query is $(r_1,r_3,c_3)$,
$a_1, a_2$ correspond to $xi, ix$ on $\reg{A}_1\reg{A}_2$,
$b_1, b_2$ correspond to $xz, zx$ on $\reg{B}_1\reg{B}_2$,
$v_1, v_2$ correspond to $xx, zz$ on $\reg{V}_1\reg{V}_2$,
so the referee
checks $a_1a_2b_1b_2v_1v_2=-1$, for the corresponding stabilizer
$XXYYYY$.

The complete test is described in Figure \ref{fig:3pauli}.
Intuitively, the $\pauli$ test embeds the Magic Square test as a three-player test, where two players in $\pauli$ jointly play the role of a single player in
the Magic Square game by measuring certain logical $X,Z$ observables.

\begin{figure}[htbp]
  \rule[1ex]{\textwidth}{0.5pt}\\
  Let $\mQ$ be the question set defined in~\eqref{eq:def-e-p}.
  The referee selects a query $(q_1,q_2,q_3)$ from $\mQ\times\mQ\times\mQ$
  uniformly at random, and sends one question to each player.
  Each player responds with two bits denoted by $a,b,v\in\{\pm 1\}^2$
  respectively.

For each query $q\in \mQ$, let $G_q$ be the group generated by the commuting two-qubit Pauli operators indicated in the corresponding row or column of the magic
  square described in Eq.~\eqref{eq:ms}.
  The group $G_q$ always contains four elements. Two of these elements are indexed by the first two entries in the row or column, and to these elements are associated the players' first two answers. In all cases except for the third column, the product of these elements is the third entry in the row or column, and to it the referee associates the product of the players' answers. For the case of the third column $c_3$, the value associated to the last square is the opposite of the product of the players' answers.

  If there is an operator $P \in G_{q_1}\times G_{q_2} \times G_{q_3}$ such
  that either $P$ or $-P$ is a stabilizer of the tensor product of two $\GHZ$ states,
  the referee rejects whenever the associated parity computed from the players' answers does not equal $+1$ or $-1$ respectively. In all other cases, the verifier accepts.
  \rule[1ex]{\textwidth}{0.5pt}
\caption{Description of the test $\pauli$.}
\label{fig:3pauli}
\end{figure}

\begin{theorem}[$3$-player Pauli test]\label{thm:3pauli}
There exists a three-player test $\pauli$, described in Figure~\ref{fig:3pauli}, with the following properties.
\begin{enumerate}
\item The marginal distribution on questions to each player is uniform
  over the set $\mQ$ defined in~\eqref{eq:def-e-p};
\item Each player replies with two bits in $\{\pm 1\}^2$;
\item For $i\in\{1,2,3\}$, let $\mH_i$ be a Hilbert space, and let $(X_1,Y_1,Z_1)$ and $(X_2,Y_2,Z_2)$ be any two commuting triples of observables satisfying the Pauli relations acting on $\mH_i$.  Then, there is a strategy for the players that succeeds with probability $1$ and only requires the measurement of observables obtained as the product of $X_1,Y_1,Z_1,X_2,Y_2,Z_2$ on an eigenvalue-$1$ eigenstate of the operator $G(X_1,Y_1)\cdot G(X_2,Y_2)$, where $G(\cdot,\cdot)$ is as in~\eqref{eq:ghz-op};
\item For any $\eps\geq 0$ there is $\delta = O({\eps}^{1/4})$ such that for
  any strategy with success probability at least $1-\eps$, there are
  local isometries on $\mH_i$, $i\in\{1,2,3\}$, such that, under the
  isometries, $\ket{\psi} \simeq_\delta
  \ket{\GHZ}_{123}\ket{\GHZ}_{123}\ket{\psi'}$ and for each player, the
  observables $X_1$, $X_2$ associated with the first and second answer
  bit to question $r_1$ and observables $Z_2$, $Z_1$ associated with
  the first and second answer bits to question $r_2$ satisfy $X_j\simeq_\delta
  \sigma_{x,j}$ and $Z_j \simeq_\delta \sigma_{z,j}$, where $\sigma_{w,j}$ is the Pauli
  $\sigma_w$ observable acting on the player's $j$-th qubit for
  $w\in\{x,z\}$.
\end{enumerate}
\end{theorem}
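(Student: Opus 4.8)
The first two properties are immediate from the description in Figure~\ref{fig:3pauli}: the referee samples each player's question independently and uniformly from $\mQ$, and each player returns the two outcomes of measuring the first two (commuting) Pauli operators in her row or column. For completeness (item~3) I would argue by a direct stabilizer computation. The operators $G(X_1,Y_1)$ and $G(X_2,Y_2)$ act on disjoint qubit ``layers'' and hence commute; by Theorem~\ref{thm:ghz} the eigenvalue-$1$ eigenstate of each is isometric to a $\GHZ$ state on the corresponding layer, so the honest strategy can be taken to share $\ket{\GHZ}\ket{\GHZ}$ (tensored with a reservoir $\ket{\psi'}$). On any query each player measures commuting Pauli observables, so the product of the answer bits entering a given check equals the eigenvalue of the product observable $P\in G_{q_1}\times G_{q_2}\times G_{q_3}$. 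Whenever $\pm P$ stabilizes $\ket{\GHZ}\ket{\GHZ}$ this eigenvalue is deterministically $\pm 1$, matching the referee's predicate. I would verify this on the generators of the two $\GHZ$ stabilizer groups — the $X$-type check $a_1b_1v_1=1$ arising from $(r_1,r_1,r_1)$, the $Z$-type checks, and a sign-carrying case such as $(r_1,r_3,c_3)$ producing the value $-1$ — and then invoke linearity of the stabilizer group to cover all remaining checks, giving success probability $1$.

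The heart of the argument is soundness (item~4), and the key structural fact is the one flagged after the theorem: two of the three players can jointly simulate a single Magic Square player. On a $\GHZ$ state on $\reg{A}\reg{B}\reg{V}$ the stabilizers give $Z_{\reg{A}}\approx_0 Z_{\reg{B}}\approx_0 Z_{\reg{V}}$ and $X_{\reg{A}}\approx_0 X_{\reg{B}}X_{\reg{V}}$ on the state. Defining logical operators $\bar Z=Z_{\reg{B}}$ and $\bar X=X_{\reg{B}}X_{\reg{V}}$ on the joint system of players $\reg{B}$ and $\reg{V}$, the pair $(\bar X,\bar Z)$ anticommutes and the state restricted to the $\reg{A}$--$\reg{BV}$ cut is an $\ket{\EPR}$ pair between $\reg{A}$ and the logical qubit. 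I would therefore analyze the three bipartitions (one player against the other two) separately.

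Fix the cut $\reg{A}$ versus $\reg{BV}$. Conditioning on the event that players $\reg{B}$ and $\reg{V}$ receive the \emph{same} question $q\in\mQ$ — an event of probability $1/6$ under the uniform product distribution — their post-processed outcomes (the product $b_iv_i$ for $X$-type entries, a single player's outcome for $Z$-type entries) implement exactly the logical Magic Square measurements associated with $q$. I would then check, case by case over the six values of $q$ and player $\reg{A}$'s question, that the $\pauli$ predicate restricted to these queries implies the acceptance predicate of the uniform-question variant of $\MS$ from Theorem~\ref{thm:ms-rigid}: for instance the query $(r_1,c_1,c_1)$ forces the shared-cell agreement $a_1=b_1v_1$ through the stabilizer $X_{\reg{A}}X_{\reg{B}}X_{\reg{V}}$, and the diagonal queries $(q,q,q)$ recover the match checks. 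Since $\pauli$ is passed with probability $\geq 1-\eps$, the embedded Magic Square game is passed with probability $\geq 1-O(\eps)$, and Theorem~\ref{thm:ms-rigid} supplies local isometries under which the state across the cut is $\simeq_\delta\ket{\EPR}\ket{\EPR}\ket{\psi'}$ with $\delta=O(\sqrt\eps)$, player $\reg{A}$'s physical observables satisfy $X_j\simeq_\delta\sigma_{x,j}$, $Z_j\simeq_\delta\sigma_{z,j}$, and the logical observables satisfy $X_{\reg{B}}X_{\reg{V}}\simeq_\delta\sigma_x$, $Z_{\reg{B}}\simeq_\delta\sigma_z$.

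To finish I would glue the three bipartite conclusions. Running the reduction on $\reg{B}$--$\reg{AV}$ and $\reg{V}$--$\reg{AB}$ certifies the physical $\sigma_x,\sigma_z$ observables of players $\reg{B}$ and $\reg{V}$, while the three $\ket{\EPR}$ certifications together reconstitute the full $\GHZ$ stabilizer group on each layer (the parity $X_{\reg{A}}X_{\reg{B}}X_{\reg{V}}$ and the three $Z$-parities), pinning the global state down to $\ket{\GHZ}\ket{\GHZ}\ket{\psi'}$. I expect this gluing to be the \emph{main obstacle}: Theorem~\ref{thm:ms-rigid} produces a separate isometry for each bipartition, and one must show these can be taken mutually consistent, so that state-closeness and all three players' observable approximations hold simultaneously under one isometry per player. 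The standard route is to fix the isometries from one reduction and transport the observables certified by the others using that they act on a common register; it is precisely this conversion of operator approximations on the state into a bound on the global state distance where I expect the estimate to degrade from the $O(\sqrt\eps)$ of a single Magic Square reduction to the claimed $\delta=O(\eps^{1/4})$.
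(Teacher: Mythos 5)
Your high-level plan --- embed the Magic Square game by letting two $\pauli$ players jointly simulate one logical $\MS$ player --- is exactly the paper's, but two of your concrete steps do not go through as written. First, the logical player cannot be built by conditioning on $\reg{B}$ and $\reg{V}$ receiving the \emph{same} question and taking products. On the code space $\Span\{\ket{00},\ket{11}\}_{\reg{BV}}$ the logical operators are $\bar X=X_{\reg{B}}X_{\reg{V}}$ but $\bar Z=Z_{\reg{B}}$ (with $Z_{\reg{B}}Z_{\reg{V}}$ acting as the identity), so a mixed entry such as $xz$ has logical implementation $X_{\reg{B},1}X_{\reg{V},1}Z_{\reg{B},2}$: player $\reg{B}$ must measure the entry $xz$ while player $\reg{V}$ must measure the entry $xi$, i.e.\ they must receive \emph{different} questions ($r_3$ and $r_1$, or $c_1$ and $c_1$). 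If both receive $r_3$, the products $b_iv_i$ implement the logical row $r_1$ (the $Z$-parts cancel on the code space), and $\reg{V}$'s four-outcome measurement of the commuting pair $\{X_1Z_2,\,Z_1X_2\}$ yields no outcome for $X_1$ alone, so no post-processing of equal-question data produces logical answers for the row $r_3$. The paper instead defines the logical observable for each entry $m$ via a stabilizer of $\ket{\GHZ}^{\otimes 2}$ whose first tensor component is $m$, which naturally pairs one player's row question with a generally different row or column question to the other (e.g.\ the queries $(r_3,r_3,r_1)$ and $(c_1,r_3,r_1)$ for $m=xz$); the first step of its soundness proof uses these pairings to establish $R_m\approx_{\sqrt\eps}C_m$ for every entry, the consistency statement your sketch omits.

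Second, the gluing of three bipartite $\MS$ isometries that you flag as the main obstacle is both genuinely problematic (Theorem~\ref{thm:ms-rigid} yields a single isometry on $\mH_{\reg{B}}\otimes\mH_{\reg{V}}$ with no a priori tensor-product structure across $\reg{B}$ and $\reg{V}$) and unnecessary. The paper runs the reduction once per player, keeps from Theorem~\ref{thm:ms-rigid} only the conclusion about the \emph{single} physical player's observables ($R_{wi}\simeq_\delta\sigma_{w,1}$, $R_{iw}\simeq_\delta\sigma_{w,2}$), and then recovers the global state directly from the fact that the test forces $\ket{\psi}$ to be approximately stabilized by $X_j\otimes X_j\otimes X_j$, $Z_j\otimes Z_j\otimes I$ and $I\otimes Z_j\otimes Z_j$, pinning it to $\ket{\GHZ}^{\otimes2}\ket{\psi'}$. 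Relatedly, you misplace the source of the exponent $1/4$: it does not come from gluing but from composing two square-root losses --- the simulated $\MS$ strategy only succeeds with probability $1-O(\sqrt\eps)$, because the operator identities extracted from the $\pauli$ checks hold with error $O(\sqrt\eps)$, and Theorem~\ref{thm:ms-rigid} then contributes another square root. Your treatment of items 1--3 is essentially the paper's and is fine.
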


\begin{remark}
The soundness guarantees provided by Theorem~\ref{thm:3pauli} are analogous to those of Theorem~\ref{thm:ms-rigid}, except that they apply to a $3$-player test, two copies of the $\GHZ$ state, and the $\sigma_x$, $\sigma_z$ observables. The soundness parameter $\delta$ has a worse dependence on $\eps$, with an exponent $1/4$ instead of $1/2$. We did not try to optimize the exponent and expect it should be possible to improve it.
\end{remark}

\begin{proof}
  The first and second items are clear from the definition of the test
  in Figure~\ref{fig:3pauli}.
  To show the third, we describe a successful strategy for the
  players.
  Since $X_j,Y_j,Z_j$ satisfy the Pauli relations they are
  isomorphic to either the standard Pauli
  matrices~\eqref{eq:pauli-matrix} or their complex conjugate.
  For simplicity, assume the former, so that 
  each of the three players, upon receiving the question $q$, measures the first two commuting two-qubit Pauli operators in the corresponding
  row or column.
  As the referee only verifies the constraints implied by the
  stabilizer of the two $\GHZ$ states, it is immediate that the players succeed with
  probability $1$.

  We now show item 4., soundness.
  Consider a strategy for the players, using an arbitrary shared state
  $\ket{\psi}$ and projective measurements on that state, that succeeds
  with probability at least $1-\eps$.

  As the players' strategy uses projective measurement with four
  outcomes, they each define two observables.
  For each entry $m$ in the magic square, we define two observables,
  $R_m$ for the row and $C_m$ for the column. For example,
   row observables $R_{xi}$, $R_{ix}$ and $R_{xx}$ are derived from the four-outcome measurement applied by a player upon receiving
  question $r_1$. Similarly,  observables $C_{xx}$, $C_{zz}$ and $C_{yy}$
  are derived from the measurement applied by a player upon receiving question $c_3$.
  By definition, observables $R_m$ taken from the same row, or observables $C_m$
  taken from the same column, commute with each other.

  We show that for any of the nine possible values for $m$, the two observables $R_m$ and $C_m$ are close in the state-dependent distance.
  That is, the observables are almost identical, irrespective of whether the entry was asked as part of a row or a column.
  We show this for the example of the entry $m=xz$ asked to the first player; all other cases
  follow by a similar argument.

  The first step is to observe that it is always possible to select an element in the stabilizer group of $|\GHZ\>^{\otimes 2}$
  such that the first two tensor components (corresponding to the first qubit of each $\GHZ$ state) are $\sigma_{x,1}$ and $\sigma_{z,2}$. Here we can for example choose
  $\sigma_{x,1}\sigma_{z,2} \otimes \sigma_{x,1}\sigma_{z,2}\otimes \sigma_{x,1}\sigma_{i,2}$.
  In general, for any $m=rs$, we can find a stabilizer of $|\GHZ\>$ with the first tensor component being $\sigma_{r}$, since the stabilizer group of $|\GHZ\>$ includes $\sigma_x \otimes \sigma_x\otimes \sigma_x$ and $\sigma_z \otimes \sigma_z \otimes \sigma_i$, and similarly for $\sigma_{s}$.  Tensoring these two stabilizers gives a stabilizer of $|\GHZ\>^{\otimes 2}$ with $\sigma_{r,1}\sigma_{s,2}$ acting on the first qubits of the two $\GHZ$ states.


	As a second step, for precisely the same reason the chosen stabilizer can always be recovered from two distinct queries, the first with a row question involving $m$ to the first player, and the second with a column question involving $m$ to the first player. Here, the two queries are $(r_3,r_3,r_1)$ and $(c_1,r_3,r_1)$. For the case of $(r_3,r_3,r_1)$, one of the constraints verified by the referee is
  $a_1b_1v_1=1$, which implies that
  \begin{equation}\label{eq:r-1n}
    R_{xz} \otimes I \otimes I \approx_{\sqrt{\epsilon}} I \otimes R_{xz} \otimes R_{xi}.
  \end{equation}
  Similarly, for question $(c_1,r_3,r_1)$, the referee's check
  implies that
  \begin{equation*}
    C_{xz} \otimes I \otimes I \approx_{\sqrt{\epsilon}} I \otimes R_{xz} \otimes R_{xi}.
  \end{equation*}
  Combining the above two equations establishes that
  \begin{equation*}
    R_{xz} \approx_{\sqrt{\epsilon}} C_{xz}.
  \end{equation*}

Having shown analogous relations for each possible entry $m$ in the square, it follows that the set of operators $R_{m}$ approximately
  satisfies all the algebraic constraints for the operators in the
  magic square~\eqref{eq:ms}, i.e. they approximately multiply to the identity or its opposite for each row or column, as required. For example, $R_{xi}R_{ix}R_{xx}= I$ follows simply by definition of these observables. On the other hand, $R_{xx}R_{zz}R_{yy} \approx -I$ follows from the same using the column observables, which holds by definition, and the approximation $R_m \approx C_m$ shown above.

	It is then straightforward to devise a strategy for the two-player Magic Square test in which the first player determines her answers by measuring the observables $R_{m}$, and the role of the second player in the game is played by a joint strategy for the second and third players here, where each player measures the required observable that follows from using~\eqref{eq:r-1n} and analogous relations that hold for each possible entry. From the previous analysis it follows that the resulting strategy succeeds in
  the Magic Square test with probability  at least $1-O(\sqrt{\epsilon})$. Applying Theorem~\ref{thm:ms-rigid} it follows that there exists a local isometry such that
  \begin{equation*}
    R_{wi} \simeq_\delta \sigma_{w,1} \quad R_{iw} \simeq_\delta \sigma_{w,2},
  \end{equation*}
  for some $\delta = O(\eps^{1/4})$ and all $w\in{x,z}$.

To conclude, recall that by definition, $X_1 = R_{xi}$, $X_2 = R_{ix}$, $Z_1 = R_{zi}$ and
  $Z_2 = R_{iz}$.
  The characterization of the shared state claimed in item 4 follows
  from the form for $X_j$ and $Z_j$ described above, and the
  definition of the test, which in particular implies that the state is
  stabilized by $X_j\otimes X_j \otimes X_j$, $Z_j\otimes Z_j \otimes I$, and $I \otimes Z_j \otimes
  Z_j$, for $j\in\{1,2\}$.
\end{proof}

\section{Coherent state exchange with three players}
\label{sec:game}

In this section we describe our main result, a three-player game between a classical referee and three players that has the property that the optimal success probability of $1$ can only be achieved in the limit of arbitrarily high-dimensional entanglement. The first player in the game is called the ``virtual verifier'', $P_V$. The remaining two players are referred to using symbols $P_A$, $P_B$ respectively.
The game, called the $\game$ game, is described in Figure~\ref{fig:game-n}.
We first give some intuition behind the game.
%
%
In Section~\ref{sec:completeness} we exhibit a family of strategies for the players in the game, using states of growing dimension and with success probability that goes to $1$. In Section~\ref{sec:soundness} we show that any strategy for the players with success close to $1$ in the game must use an entangled state that has large local dimension.

\begin{figure}[htbp]
\rule[1ex]{\textwidth}{0.5pt}\\
The referee interacts with three players, labeled $P_V$, $P_A$ and
$P_B$. Each player receives a question taken from the set $\{0,1\} \times
\mQ$, where $\mQ$ is specified in~\eqref{eq:def-e-p}.
We use the symbol $\pi_{V}$, $\pi_A$, $\pi_B$ to denote the first component (lying in $\{0,1\}$) of the question to $P_V$, $P_A$ and $P_B$ respectively. It will always be the case that $\pi_A = \pi_B = \pi$.
In the game, $P_V$ should reply with 3 bits $(u,v) \in \{0,1\} \times \{\pm 1\}^2$, while
$P_A$, $P_B$ each reply with 2 bits $a,b \in \{\pm 1\}^2$ respectively.
Let $v=(v_1,v_2), a=(a_1,a_2), b=(b_1,b_2)$.
\\

The referee performs either of the following tests chosen at random with equal probability:
\begin{itemize}
\item[(a)] The referee sets $\pi_V = \pi = 0$. He executes the test $\pauli$ with the three players, inserting the question from $\pauli$ as the second component of their question, and checking validity of the triple $(v,a,b)$ extracted from the players' answers as would the verifier in $\pauli$.
\item[(b)] The referee sets $\pi_V=1$ and $\pi=0$.  The second component of $P_V$'s question is chosen uniformly at random from $\mQ$.  The referee performs either of the following with equal probability:
\begin{itemize}
\item[(i)] Send both $P_A$ and $P_B$ the question $r_2$.
  Let $a_1$ and $a_2$ be the answers associated with entries $iz$ and $zi$ respectively.
  Reject if $a_1=1$ and (($u=0$ and $a_2=-1$) or ($u=1$ and $a_2=1$)).
  Accept in all other cases.
\item[(ii)] Send both $P_A$ and $P_B$ the question $c_1$.
  Let $a_1$ and $a_2$ be the answers associated with entries $xi$ and $iz$ respectively.
  Reject if $a_2=-1$ and (($u=0$ and $a_1b_1=-1$) or ($u=1$ and
  $a_1b_1=1$)).
  Accept in all other cases.
\end{itemize}
\item[(c)] The referee sets $\pi_V=1$ and $\pi=0$. He sets the second component of $P_V$'s question to $r_2$. He sends both $P_A$ and $P_B$ the same question, $r_2$. The referee rejects if $a_1 \neq v_2$ or $b_1 \neq v_2$.
\item[(d)] The referee sets $\pi_V=\pi=1$, and executes the test $\pauli$ as in part (a). If $u=0$ the referee accepts if and only if the players' answers $(v,a,b)$ pass the test $\pauli$. If $u=1$ the referee always accepts.
\end{itemize}
\rule[1ex]{\textwidth}{0.5pt}
\caption{Description of the game $\game$.}
\label{fig:game-n}
\end{figure}

Before giving details of the analysis, we provide intuition behind the construction of the game. As in~\cite{leung2013coherent,regev2015quantum} the referee's goal in the game is to force $P_V$, $P_A$ and
$P_B$ to perform the transformation (normalization omitted)
\begin{equation}\label{eq:ideal-state}
  \begin{split}
    & \ket{0}_{\reg{V}} \; \ket{00}_{\reg{A}_1\reg{B}_1}\ket{00}_{\reg{A}_2\reg{B}_2} + \ket{1}_{\reg{V}} \; \ket{\EPR}_{\reg{A}_1\reg{B}_1} \ket{11}_{\reg{A}_2\reg{B}_2}\\
    \rightarrow\ & \ket{0}_{\reg{V}} \; \ket{00}_{\reg{A}_1\reg{A}_2}\ket{00}_{\reg{B}_1\reg{B}_2} + \ket{1}_{\reg{V}} \; \ket{11}_{\reg{A}_1\reg{B}_1} \ket{11}_{\reg{A}_2\reg{B}_2}\;.
  \end{split}
\end{equation}
Due to the EPR pair having one e-bit of entanglement, as opposed to the state $\ket{11}$ being a product state, the transformation~\eqref{eq:ideal-state} can be performed using operations local to $\reg{V}$, $\reg{A}$ and $\reg{B}$ \emph{only} by exploiting a large ancilla register that is used to ``embezzle'' the e-bit of entanglement.

The game $\game$ has two overlapping sub-games, indicated by a bit
$\pi_V \in \{0,1\}$ for $P_V$, and $\pi=\pi_A=\pi_B \in \{0,1\}$ for $P_A$
and $P_B$. The first sub-game, for $\pi_V=\pi=0$ (part (a) in
Figure~\ref{fig:game-n}), uses the test $\pauli$ to constrain the
players to share two copies of the $\GHZ$ state, on which they measure
$\sigma_x$ and $\sigma_z$ Pauli observables (embedded in questions in $\mQ$).

When $\pi_V=1$, player $P_V$ is tasked to perform a special measurement, which is obtained by applying a controlled-Hadamard from the qubit associated with his share of the second $\GHZ$ state, to the qubit associated with the first, followed by a measurement of the first qubit in the $\sigma_z$ basis. This yields the outcome labeled $u$. The goal of parts (b) and (c) of the game is to verify that $P_V$ applies precisely this measurement.

In the case $P_V$ obtained the measurement outcome $|0\>$ on the first qubit, it is a simple calculation (see Section~\ref{sec:completeness}) to verify that the three players share a state that is locally isometric to the state on the left-hand side of~\eqref{eq:ideal-state}.
 Now, observe that if the referee sometimes requires the three players to execute the test $\pauli$
on the second and third copies of the $\GHZ$ state, then conditioned
on $u=0$, in order to have a chance to succeed $P_A$ and $P_B$ have to
execute the transformation~\eqref{eq:ideal-state}, which brings the
second copy in a state that is locally isometric to a $\GHZ$ state. That they are able to achieve this is checked in part (d) of the game.

Note that the bits $\pi_V$ and $\pi$ are chosen so that $P_V$ can distinguish
part (a) from parts (b), (c) and (d), while $P_{A}$ and $P_{B}$ can distinguish parts (a), (b) and (c) from part (d). This allows the rigidity results obtained from the analysis of part (a) to carry
over to the analysis of parts (b) and (c): even though $P_V$ can distinguish those parts, $P_A$ and $P_B$ cannot, and $P_V$ cannot cheat on his own. But now if
$P_V$ plays parts (b) and (c) honestly, using the fact that he cannot distinguish parts (b), (c) and (d), $P_A$ and $P_B$ have to play part (d)
honestly as well.

\subsection{Completeness}
\label{sec:completeness}

We specify a sequence of strategies with growing dimension whose success probability approaches $1$.
The strategies follow closely the intuition for the game described earlier.

\begin{lemma}\label{lem:completeness}
For any integer $d\geq 1$ there exists a strategy for the players in $\game$ in which $P_V$ has three qubits and $P_A$ and $P_B$ each has $d+3$ qubits, such that the strategy is accepted with probability $1$ in parts (a), (b) and (c) of the game, and with probability $1-O(1/d)$ in part (d).
\end{lemma}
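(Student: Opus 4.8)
The plan is to exhibit an explicit strategy built from three ingredients: (i) the honest $\GHZ$-based strategy for $\pauli$ applied to two $\GHZ$ states shared among $P_V,P_A,P_B$; (ii) a specific ``controlled-Hadamard then measure'' operation for $P_V$ in the $\pi_V=1$ branch; and (iii) an approximate embezzlement procedure for $P_A$ and $P_B$ that coherently converts a product state into an EPR pair using a $d$-qubit van~Dam--Hayden embezzlement reservoir~\cite{van2003universal}. I would first fix the shared state: two copies of $\ket{\GHZ}_{123}$ on registers $\reg{A}_1\reg{B}_1\reg{V}_1$ and $\reg{A}_2\reg{B}_2\reg{V}_2$ (three qubits for $P_V$, and for each of $P_A,P_B$ two qubits plus a $d$-qubit embezzlement ancilla, giving $d+3$ qubits as claimed since one extra qubit is needed as the embezzlement target).

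\textbf{Parts (a) and (d) with $u=0$.} For the $\pi_V=\pi=0$ branch I would simply invoke item~3 of Theorem~\ref{thm:3pauli}: measuring the honest $\sigma_x,\sigma_z$-derived observables on the two $\GHZ$ states passes $\pauli$ with probability exactly $1$. For $\pi_V=1$, the key is to describe $P_V$'s measurement explicitly. Applying a controlled-Hadamard from $\reg{V}_2$ to $\reg{V}_1$ on $\ket{\GHZ}_{123}^{\otimes 2}$ and then measuring $\reg{V}_1$ in the $\sigma_z$ basis yields outcome $u$, and a direct (routine) calculation shows that conditioned on $u=0$ the post-measurement state on the remaining five qubits is locally isometric to the left-hand side of~\eqref{eq:ideal-state}. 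This is the calculation promised in the text and I would carry it out explicitly to identify the residual state. Once the state on the left of~\eqref{eq:ideal-state} is in hand, $P_A$ and $P_B$ run their embezzlement step to approximately produce the right-hand side; the resulting state on $\reg{A}_1\reg{A}_2\reg{B}_1\reg{B}_2\reg{V}$ is, up to error, two $\GHZ$ states, so re-running the honest $\pauli$ observables in part~(d) succeeds. In the $u=1$ case part~(d) accepts unconditionally, so there is nothing to check there.

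\textbf{Parts (b) and (c).} These verify that $P_V$ really applied the stated controlled-Hadamard measurement. I would check directly, using the explicit honest observables, that the correlations the referee tests in (b)(i), (b)(ii) and (c) hold exactly: the outcome $u$ together with $P_A$'s $iz,zi$ (resp.\ $xi,iz$) observables and the relation $a_1=b_1=v_2$ all follow from the stabilizer structure of $\ket{\GHZ}^{\otimes 2}$ and the definition of the controlled-Hadamard measurement. Since these are exact stabilizer identities, these parts pass with probability exactly $1$, contributing no error.

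\textbf{Main obstacle and error accounting.} The only source of error is the embezzlement step in part~(d), so the heart of the proof is quantitative. The main obstacle will be showing that the van~Dam--Hayden reservoir on $d$ qubits lets $P_A$ and $P_B$, acting locally, transform $\ket{00}_{\reg{A}_1\reg{B}_1}\mapsto\ket{\EPR}_{\reg{A}_1\reg{B}_1}$ (coherently, controlled on the $\reg{V}$ flag distinguishing the two branches of~\eqref{eq:ideal-state}) with fidelity $1-O(1/d)$. I would recall that embezzling a single e-bit from the $d$-dimensional embezzlement family incurs trace-distance error $O(1/d)$, and that performing the transformation coherently (so that the $\ket{0}_{\reg V}$ branch is left untouched while the $\ket{1}_{\reg V}$ branch is rotated) preserves this bound by a standard controlled-unitary argument. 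Propagating this $O(1/d)$ state error through the final $\pauli$ measurement in part~(d)---using that measurement outcomes are Lipschitz in the state in the relevant state-dependent distance---yields acceptance probability $1-O(1/d)$ in part~(d), and probability exactly $1$ in parts (a), (b), (c), which is precisely the claim.
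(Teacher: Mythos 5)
There is a genuine gap: your shared state has only two copies of $\ket{\GHZ}$, and that is not enough to pass part (d). In part (d) the referee runs the full $\pauli$ test again, and $\pauli$ can only be passed with probability $1$ on (a state isometric to) \emph{two} copies of $\ket{\GHZ}$. In your setup, conditioned on $u=0$ the residual state is (up to the reservoir) $\ket{0}_{\reg{V}_1}\bigl(\ket{0}_{\reg{V}_2}\ket{00}\ket{00}+\ket{1}_{\reg{V}_2}\ket{\EPR}_{\reg{A}_1\reg{B}_1}\ket{11}\bigr)/\sqrt{2}$, and even a perfect embezzlement step only regenerates a single $\GHZ$ state, on $\reg{V}_2\reg{A}_2\reg{B}_2$ (the registers $\reg{V}_1\reg{A}_1\reg{B}_1$ end up in the product state $\ket{000}$, and the reservoir $\ket{\Gamma_d}$ is shared only between $P_A$ and $P_B$, so it cannot supply a second $\GHZ$ state involving $P_V$). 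The paper's strategy therefore uses \emph{three} $\GHZ$ copies: copies $1$ and $2$ are used for the $\pauli$ test in part (a), while part (d) is played on copies $2$ and $3$, with copy $2$ being the one reconstituted by the embezzlement. This is also where the dimension counts come from: $P_V$ holds one qubit per $\GHZ$ copy (three qubits), and $P_A,P_B$ each hold three $\GHZ$ shares plus the $d$-qubit reservoir; your accounting (``two qubits plus one extra qubit as embezzlement target'') is inconsistent with your own two-copy state and gets the right number only by accident.

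Two smaller points. First, the transformation~\eqref{eq:ideal-state} goes from $\ket{\EPR}_{\reg{A}_1\reg{B}_1}$ to the product state $\ket{11}_{\reg{A}_1\reg{B}_1}$, i.e.\ the e-bit is \emph{absorbed into} the reservoir; you describe the opposite direction ($\ket{00}\mapsto\ket{\EPR}$). Both are achievable with $\ket{\Gamma_d}$, but the honest strategy needs the absorbing direction, followed by a controlled-$\sigma_x$ to realign the second branch into $\ket{\GHZ}_{\reg{V}_2\reg{A}_2\reg{B}_2}$. Second, the embezzlement unitaries cannot be ``controlled on the $\reg{V}$ flag'' as you write: $P_A$ and $P_B$ act locally, so the controls must be their own registers $\reg{A}_2$ and $\reg{B}_2$, which happen to be perfectly correlated with $\reg{V}_2$ in the state $\ket{\phi_0}$. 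The rest of your outline (honest $\pauli$ strategy for parts (a) and the exact stabilizer verification of parts (b) and (c), plus the $O(1/d)$ error accounting localized entirely in the embezzlement step) matches the paper.
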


 For any integer $d\geq 1$, define an \emph{embezzlement state}
$$ \ket{\Gamma_d}_{\reg{A'B'}} = \frac{1}{\sqrt{N_d}} \sum_{j=1}^d \ket{11}_{\reg{A'B'}}^{\otimes j} \; \ket{\EPR}_{\reg{A'B'}}^{\otimes (d-j)}\;,$$
where $\ket{\EPR} = \frac{1}{\sqrt{2}}\ket{00} + \frac{1}{\sqrt{2}} \ket{11}$ and $N_d$ is a normalization constant such that $\|\ket{\Gamma_d}\|=1$. We think of $\ket{\Gamma_d}$ as a bipartite state on $\mH_{\reg{A'}} \otimes \mH_{\reg{B'}} \simeq (\C^2)^{\otimes d}_{\reg{A'}} \otimes (\C^2)^{\otimes d}_{\reg{B'}}$. This state has the property that there exists ``left-shift'' unitaries $W_{\reg{AA'}}$ on $\C^2_{\reg{A}} \otimes \mH_{\reg{A'}}$ and $W_{\reg{BB'}}$ on $\C^2_{\reg{B}} \otimes \mH_{\reg{B'}}$ such that
\begin{equation}\label{eq:emb-overlap}
 \big|\bra{11}_{\reg{AB}}\bra{\Gamma_d}_{\reg{A'B'}} \big(  W_{\reg{AA'}} \otimes W_{\reg{BB'}}\big) \ket{\EPR}_{\reg{AB}} \ket{\Gamma_d}_{\reg{A'B'}} \big| \,\geq\,1- O(1/d)\;.
\end{equation}

\begin{proof}[Proof of Lemma~\ref{lem:completeness}]
We define a strategy for the players in $\game$. The players share
\begin{equation}\label{eq:state}
\ket{\phi} \,=\, \ket{\GHZ}_{\reg{V}_1\reg{A}_1\reg{B}_1} \ket{\GHZ}_{\reg{V}_2\reg{A}_2\reg{B}_2} \ket{\GHZ}_{\reg{V}_3\reg{A}_3\reg{B}_3} \ket{\Gamma_d}_{\reg{A'B'}}\;.
\end{equation}
Here each of the registers $\reg{V}_j$, $\reg{A}_j$ and $\reg{B}_j$, for $j\in\{1,2,3\}$, is isomorphic to $\C^2$, and the registers $\reg{A'}$ and $\reg{B'}$ each has dimension $2^d$. Player $P_V$ holds registers $\reg{V}_1\reg{V}_2\reg{V}_3$, $P_A$ has $\reg{A}_1\reg{A}_2\reg{A}_3\reg{A'}$, and $P_B$ has $\reg{B}_1\reg{B}_2\reg{B}_3\reg{B'}$.

When $\pi_V = \pi = 0$, each player follows the honest strategy for $\pauli$ using her first and second qubits (item 3. in Theorem~\ref{thm:3pauli}).

If $\pi_V = 1$, $P_V$ performs a projective measurement $\Pi = \{\Pi^0,\Pi^1\}$ on his registers $\reg{V}_1\reg{V}_2$, where
\begin{equation}\label{eq:def-pi}
\Pi^0 =  \proj{0}_{\reg{V}_1}\otimes \proj{0}_{\reg{V}_2} +  \proj{+}_{\reg{V}_1}\otimes\proj{1}_{\reg{V}_2}\;,
~~~
\Pi^1 = \proj{1}_{\reg{V}_1}\otimes \proj{0}_{\reg{V}_2} + \proj{-}_{\reg{V}_1}\otimes \proj{1}_{\reg{V}_2}\;.
\end{equation}
The outcome determines her first answer bit $u\in\{0,1\}$. The player then applies a Hadamard on register $\reg{V}_1$, controlled on register $\reg{V}_2$.  (The measurement and the controlled-Hadamard have the same effect as a controlled-Hadamard followed by a measurement in the $\sigma_z$ eigenbasis on $\reg{V}_1$.)  We can already verify that this strategy succeeds with probability $1$ in part (b) of the test, which only depends on $P_V$'s answer $u$.  For $u=0$ and $1$, the post-measurement states of all players, after $P_V$ has applied the controlled-Hadamard, are
\begin{equation}\label{eq:phi0}
  \begin{split}
    \ket{\phi_0} = \ket{0}_{\reg{V}_1} \otimes \frac{1}{\sqrt{2}} \Big(\ket{0}_{\reg{V}_2} \ket{00}_{\reg{A}_1\reg{B}_1}\ket{00}_{\reg{A}_2\reg{B}_2} + & \ket{1}_{\reg{V}_2} \ket{\EPR}_{\reg{A}_1\reg{B}_1} \ket{11}_{\reg{A}_2\reg{B}_2} \Big) \otimes \\
    & \qquad \ket{\GHZ}_{\reg{V}_3\reg{A}_3\reg{B}_3}\otimes\ket{\Gamma_d}_{\reg{A'B'}}\;,
  \end{split}
\end{equation}
\begin{equation}\label{eq:phi1}
  \begin{split}
    \ket{\phi_1} = \ket{1}_{\reg{V}_1} \otimes \frac{1}{\sqrt{2}} \Big(\ket{0}_{\reg{V}_2} \ket{11}_{\reg{A}_1\reg{B}_1}\ket{00}_{\reg{A}_2\reg{B}_2} +  & \ket{1}_{\reg{V}_2} \ket{\EPR^-}_{\reg{A}_1\reg{B}_1} \ket{11}_{\reg{A}_2\reg{B}_2} \Big) \otimes \\
    & \qquad \ket{\GHZ}_{\reg{V}_3\reg{A}_3\reg{B}_3}\otimes\ket{\Gamma_d}_{\reg{A'B'}}\;,
  \end{split}
\end{equation}
where $\ket{\EPR^-} = \frac{1}{\sqrt{2}} \big( |00\> - |11\> \big)$.

In case (i), assuming $a_1=1$ $\left({\rm projecting~onto~}|0\>_{\reg{A}_2}\right)$ then $a_2=1$ $\left({\rm having~the~state~}|0\>_{\reg{A}_1}\right)$ with certainty and the referee accepts. In case (ii), assuming $a_2=-1$ $\left({\rm projecting~onto~}|1\>_{\reg{A}_2}\right)$ we have $a_1b_1=1$ with certainty (since $\sigma_x\otimes \sigma_x$ stabilizes $\ket{\EPR}$), so again the referee accepts.

To analyze part (c) we complete the description of $P_V$'s strategy when $\pi_V = 1$.
After the measurement to obtain $u$, $P_V$ takes the second part of the question and applies the honest strategy in game $\pauli$ using the appropriate Pauli operators on his registers $\reg{V}_2$ and $\reg{V}_3$.  It is then straightforward to verify that in both cases, $u=0$ or $u=1$, the players are accepted with certainty (note that by definition $P_V$'s answer $v_1$ is obtained by measuring $\sigma_z$ on register $\reg{V}_2$).

Note that $P_A$ and $P_B$ play parts (a), (b) and (c) using the same strategy (indeed, they have to, since they cannot distinguish questions coming from either of those parts of the game).

Finally we analyze part (d) (when $\pi_V=\pi=1$). First note that $P_V$ necessarily plays as already described in part (c).  Next we define a strategy for $P_A$ and $P_B$. Since in part (d) the referee always accepts in case $P_V$ reports $u=1$, it suffices to examine the players' strategy in case $u=0$. In this case, after $P_V$ has measured using $\Pi$ and applied the controlled-Hadamard, the post-measurement state of all players is as in~\eqref{eq:phi0}. Player $P_A$ (resp.\ $P_B$) performs a controlled-unitary $W_{\reg{A}_1\reg{A}'}$ (resp.\ $W_{\reg{B}_1\reg{B}'}$) as described in~\eqref{eq:emb-overlap}, controlled on the register $\reg{A}_2$ (resp.\ $\reg{B}_2$).
By~\eqref{eq:emb-overlap} the resulting state has overlap $1-O(1/d)$ with the state
\begin{align*}
\ket{0}_{\reg{V}_1} \otimes \frac{1}{\sqrt{2}} \Big( \ket{0}_{\reg{V}_2} \ket{00}_{\reg{A}_1\reg{A}_2}\ket{00}_{\reg{B}_1\reg{B}_2} + \ket{1}_{\reg{V}_2} \ket{11}_{\reg{A}_1\reg{B}_1} \ket{11}_{\reg{A}_2\reg{B}_2} \Big)  \otimes\ket{\GHZ}_{\reg{V}_3\reg{A}_3\reg{B}_3} \ket{\Gamma_d}_{\reg{A'B'}}\;.
\end{align*}
The player then applies a controlled-$\sigma_x$ operation on register $\reg{A}_1$ (resp.\ $\reg{B}_1$), controlled on $\reg{A}_2$ (resp.\ $\reg{B}_2$). This brings the state $O(1/d)$-close to
\begin{align}
\ket{000}_{\reg{V}_1\reg{A}_1\reg{B}_1} \otimes \frac{1}{\sqrt{2}} \Big( \ket{0}_{\reg{V}_2} \ket{0}_{\reg{A}_2}\ket{0}_{\reg{B}_2} + \ket{1}_{\reg{V}_2} \ket{1}_{\reg{A}_2} \ket{1}_{\reg{B}_2} \Big)  \otimes\ket{\GHZ}_{\reg{V}_3\reg{A}_3\reg{B}_3} \ket{\Gamma_d}_{\reg{A'B'}}\;.\label{eq:state-2}
\end{align}
At this point the player applies the honest strategy for the test $\pauli$ on the second and third copies of $|\GHZ\>$.
 Together with $P_V$'s strategy, due to the small discrepancy between the players' shared state and the ideal state in~\eqref{eq:state-2}, the players succeed with probability $1-O(1/d)$ in part (d).
\end{proof}

\subsection{Soundness}
\label{sec:soundness}

For the soundness analysis we rely on the following fact, implicit in~\cite[Section~3]{leung2013coherent} (building on results in \cite{van2003universal,fannes1973continuity}) and stated as Fact 5.7 in~\cite{regev2015quantum}.

\begin{fact}\label{fact:emb} Let $n,t$ be integers, $U,V\in\Lin(\C^n\otimes \C^t)$ arbitrary operators of norm at most $1$, and $\ket{\varphi}\in\C^n\otimes \C^n$, $\ket{\Psi}\in\C^t\otimes \C^t$ of unit norm. Let $S$ be the von Neumann entropy of the reduced density matrix of $\ket{\varphi}$ on any of the two subsystems, and assume $S\ge 1$. Then
$$1-\big|\bra{\varphi}\bra{\Psi} U\otimes V \ket{0^n 0^n}\ket{\Psi}\big|^2 \,\geq\, \min\Big\{ \frac{1}{4e^2},\,\frac{S^2}{16 \log^2 (3t)}\Big\}.$$
\end{fact}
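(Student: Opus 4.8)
The plan is to prove the statement by an entropic argument combining conservation of entanglement entropy under local unitaries with a Fannes-type continuity inequality, which is exactly the strategy underlying the embezzlement bounds of van Dam--Hayden and Leung et al.\ cited before the statement. The first move is to reduce to unitary $U,V$. Given contractions of norm at most $1$, dilate each to a unitary $\tilde U$ on $\C^n\otimes\C^t\otimes\C^2$ and $\tilde V$ likewise, appending a fresh ancilla qubit initialized to $\ket{0}$ on each of the two system registers, so that $U=(I\otimes\bra{0})\tilde U(I\otimes\ket{0})$. The overlap in the statement then equals $\bra{\varphi}\bra{\Psi}\bra{00}(\tilde U\otimes\tilde V)\ket{0^n 0^n}\ket{\Psi}\ket{00}$; since the appended ancillas live on the system side (enlarging the target dimension only by a constant factor) and not on the catalyst registers, the catalyst dimension stays $t$, and I may assume $U,V$ unitary at essentially no cost.

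Next I set up registers so that $U$ acts on Alice's pair $\reg{A}\reg{A}'\simeq\C^n\otimes\C^t$ and $V$ on Bob's pair $\reg{B}\reg{B}'$, with the catalyst $\ket{\Psi}$ shared across $\reg{A}'\reg{B}'$. Writing $\ket{\Psi}=\sum_{i=1}^t\sqrt{q_i}\ket{i}_{\reg{A}'}\ket{i}_{\reg{B}'}$ in Schmidt form, the output $\ket{\mathrm{out}}=(U\otimes V)\ket{0^n}_{\reg{A}}\ket{0^n}_{\reg{B}}\ket{\Psi}$ becomes $\sum_{i=1}^t\sqrt{q_i}\ket{u_i}_{\reg{A}\reg{A}'}\ket{w_i}_{\reg{B}\reg{B}'}$ with $\{\ket{u_i}\},\{\ket{w_i}\}$ orthonormal. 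Two facts follow: the marginal $\rho^{\mathrm{out}}_{\reg{B}\reg{B}'}$ has rank at most $t$ and entropy $H(q)\le\log t$, whereas the target marginal $\rho^{\mathrm{tgt}}_{\reg{B}\reg{B}'}=\rho^{\varphi}_{\reg{B}}\otimes\rho^{\Psi}_{\reg{B}'}$ has entropy $S+H(q)$. Thus the two marginals have entropies differing by exactly $S$, which is the deficit the local unitaries cannot make up: the catalyst can only lend a bounded amount of entanglement before it is disturbed.

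I then convert fidelity to entropy. As $\mathrm{out}$ and $\mathrm{tgt}$ are pure with overlap $\gamma$, their trace distance is $\sqrt{1-\gamma^2}$, and monotonicity of trace distance under partial trace bounds the distance $T$ of the two $\reg{B}\reg{B}'$-marginals by $\sqrt{1-\gamma^2}$. A Fannes-type continuity bound then turns the entropy gap $S$ into a lower bound on $T$: dephasing $\rho^{\mathrm{tgt}}_{\reg{B}\reg{B}'}$ against the rank-$\le t$ support projector of $\rho^{\mathrm{out}}_{\reg{B}\reg{B}'}$ yields $S\lesssim T\log(3t)+h(T)$ in the relevant regime, hence $T\gtrsim S/\log(3t)$ and $1-\gamma^2\ge T^2\gtrsim S^2/\log^2(3t)$. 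The universal floor $1/(4e^2)$ appears as the saturation value in the complementary regime, where $S$ exceeds what a dimension-$t$ reservoir can support and the linear-in-$T$ estimate would otherwise run past the trivial bound $T\le 1$; this is what produces the minimum of the two terms.

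The main obstacle is the dimension factor inside Fannes' inequality. A naive application on the $\reg{B}\reg{B}'$ marginal introduces $\log(\dim)$ with $\dim$ as large as $nt$, whereas the conclusion must carry only $\log(3t)$. This is precisely where the rank-at-most-$t$ structure of $\rho^{\mathrm{out}}_{\reg{B}\reg{B}'}$ is indispensable: the continuity estimate has to be run against the support of the low-rank state rather than the ambient space, and the contribution of the orthogonal tail must be shown negligible or absorbed into the constant floor. Pinning down the tail term cleanly, and extracting the exact constants ($16$, the floor $1/(4e^2)$, and the shift from $t$ to $3t$ that keeps $\log(3t)>0$ when $t=1$), is the delicate part; the conceptual content --- that a reservoir of dimension $t$ can embezzle at most $\sim\log t$ worth of entanglement before being measurably perturbed --- is just Fannes' inequality together with the conservation of entanglement entropy.
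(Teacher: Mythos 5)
You are attempting to prove a statement that the paper itself does not prove --- Fact~\ref{fact:emb} is quoted from Leung et al.\ and Regev--Vidick --- so the only comparison available is with the argument implicit in those works, and that argument is essentially the one you outline: the output of $U\otimes V$ has Schmidt rank at most $t$ across the $\reg{A}\reg{A}'|\reg{B}\reg{B}'$ cut, so its reduced state has entropy at most that of the catalyst, while the target's reduced entropy exceeds it by exactly $S$; one then converts overlap to trace distance ($\|\rho-\sigma\|_1\le 2\sqrt{1-\gamma^2}$ after partial trace) and applies Fannes. Carried out plainly, with Fannes' original inequality in the regime $\|\rho-\sigma\|_1\le 1/e$ and the complementary regime supplying the floor $1/(4e^2)$, your computation yields the stated bound with $\log(nt)$ (or $\log(2nt)$ after your dilation) in place of $\log(3t)$, constants included. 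Up to that point your proposal is sound.

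The gap is exactly the step you flag, and it cannot be closed in the way you propose --- nor at all, because the inequality with $\log(3t)$ is false for general $n$ and $\ket{\varphi}$. Take $t=1$ (trivial catalyst) and $\ket{\varphi}=\sqrt{1-\beta}\,\ket{e_1}\ket{f_1}+\sqrt{\beta/N}\sum_{j=2}^{N+1}\ket{e_j}\ket{f_j}$, with $U,V$ unitaries rotating the initial product state onto $\ket{e_1}\ket{f_1}$: then $1-\gamma^2=\beta$, while $S=h(\beta)+\beta\log N\ge 1$ for $N$ large enough, for \emph{any} fixed $\beta>0$; choosing $\beta$ smaller than $\min\{1/(4e^2),\,1/(16\log^2 3)\}$ contradicts the stated bound. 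The same example shows why your pinching fix fails: the tail $(I-P)\sigma(I-P)$ carries weight up to $T$ and entropy up to $\log(nt)$, so its contribution $T\log(nt)$ is of the same order as the main term and is fully saturated here --- it cannot be absorbed into a constant. The resolution is that the ``$3$'' in $\log(3t)$ is an artifact of the cited works, where the system register is a qutrit, so that $n\le 3$ and $\log(nt)\le\log(3t)$; in the present paper the Fact is invoked only with $n=2$ and $\ket{\varphi}=\ket{\EPR}$, where again $\log(nt)\le\log(3t)$ and your unembellished Fannes computation already gives everything needed. So the delicate improvement you were hunting for does not exist: you should either state the Fact with $\log(nt)$, or add the hypothesis $n\le 3$ (equivalently, in the application, that $\ket{\varphi}$ is a maximally entangled qubit pair, which is forced by $n=2$ and $S\ge 1$), and then your argument as written is complete.
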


We show the following.

\begin{lemma}\label{lem:soundness}
Suppose a strategy for the players succeeds with probability at least $1-\eps$ in the three-player game $\game$ described in Figure~\ref{fig:game-n}. Then the players must use an entangled state such that the local dimension of players $P_A$ and $P_B$ is at least $2^{\Omega(\eps^{-c})}$, for some constant $c>0$.
\end{lemma}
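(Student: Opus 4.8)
The plan is to show that any strategy succeeding with probability $1-\eps$ must, conditioned on $P_V$ reporting $u=0$, effectively perform the embezzlement transformation~\eqref{eq:ideal-state}, and then to invoke Fact~\ref{fact:emb} to lower-bound the entropy (hence dimension) of the entanglement used by $P_A$ and $P_B$. The argument proceeds by pipelining the rigidity conclusions through the four overlapping parts of the game, using the crucial point that $P_A$ and $P_B$ cannot distinguish parts (a), (b), (c), whereas they can distinguish part (d).

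First I would apply Theorem~\ref{thm:3pauli} to part (a): a strategy with overall success $1-\eps$ succeeds with probability $1-O(\eps)$ in part (a), so there are local isometries under which $\ket{\psi}\simeq_\delta \ket{\GHZ}_{123}\ket{\GHZ}_{123}\ket{\psi'}$ with $\delta=O(\eps^{1/4})$, and the players' observables on questions $r_1,r_2$ are $\delta$-isometric to the $\sigma_x$, $\sigma_z$ Paulis on their two qubits. Because $P_A$, $P_B$ use the \emph{same} measurement across parts (a)--(c), this rigid characterization of their $r_2$ and $c_1$ observables transfers verbatim to parts (b) and (c). Next I would analyze $P_V$'s measurement $\Pi=\{\Pi^0,\Pi^1\}$: parts (b)(i), (b)(ii) and (c), which succeed with probability $1-O(\eps)$, are designed precisely to pin down $P_V$'s binary outcome $u$ against $P_A,P_B$'s now-characterized $\sigma_x,\sigma_z$ observables on $\reg{A}_2,\reg{B}_2$ and the EPR/product structure on $\reg{A}_1\reg{B}_1$. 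I would combine these constraints in the state-dependent distance to conclude that, up to error $O(\eps^{1/4})$, the post-measurement state conditioned on $u=0$ is $\delta$-isometric to the left-hand side of~\eqref{eq:ideal-state}, tensored with $\ket{\psi'}$ --- i.e.\ the first $\GHZ$ copy has been coherently prepared in the correlated product/EPR form controlled by $\reg{V}_2$.

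Then I would turn to part (d), which $P_A,P_B$ \emph{can} distinguish. Since $P_V$ is forced (by parts (b),(c), which he cannot distinguish from (d)) to play honestly, and part (d) re-runs $\pauli$ on the \emph{second} and third $\GHZ$ copies conditioned on $u=0$, success $1-O(\eps)$ in part (d) forces the second copy held by $P_A,P_B$ to become $\delta$-isometric to a $\GHZ$ state by Theorem~\ref{thm:3pauli}. The key structural consequence is that, starting from the $u=0$ conditioned left-hand state of~\eqref{eq:ideal-state} (where $\reg{A}_1\reg{B}_1$ carries one e-bit when $\reg{V}_2$ reads $\ket{1}$ but is the product state $\ket{11}$ after the transformation), the local operations of $P_A$ and $P_B$ must map a product state into an EPR pair on the $\ket{1}_{\reg{V}_2}$ branch --- the embezzlement of one e-bit. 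Formalizing this as an overlap statement of the exact shape $\big|\bra{\varphi}\bra{\Psi}\,U\otimes V\,\ket{0^n0^n}\ket{\Psi}\big|$, with $\ket{\varphi}$ the target EPR pair (entropy $S=1$) and $\ket{\Psi}$ the ancilla reservoir on $\reg{A}'\reg{B}'$ of local dimension $t$, lets me apply Fact~\ref{fact:emb}: the overlap deficit is $O(\delta^2)=O(\eps^{1/2})$, so $\eps^{1/2}\gtrsim S^2/\log^2(3t)=\Omega(1/\log^2 t)$, giving $\log t = \Omega(\eps^{-1/4})$ and hence local dimension $t\ge 2^{\Omega(\eps^{-c})}$ for some $c>0$.

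The main obstacle I expect is the second step: cleanly propagating the rigidity of part (a) through $P_V$'s nonstandard measurement $\Pi$ to obtain an \emph{isometric} characterization of the conditioned post-measurement state as the left-hand side of~\eqref{eq:ideal-state}, and carefully tracking how the $O(\eps^{1/4})$ errors accumulate under the composition of the three approximations (part (a) rigidity, the $u$-consistency checks of (b)/(c), and the part (d) $\pauli$ rigidity), all measured in the state-dependent norm $\approx_\delta$ relative to the possibly-much-larger honest state $\ket{\psi'}$. The delicate point is that $\ket{\psi'}$ is arbitrary and the embezzlement reservoir $\ket{\Psi}$ lives inside it, so I must phrase the final overlap so that $U,V$ are precisely the local maps $P_A,P_B$ apply in part (d) and so that Fact~\ref{fact:emb} applies with the target $\ket{\varphi}$ being a single e-bit; the constant $c$ emerges from chasing the $\eps^{1/4}$ through the logarithm, and I would not attempt to optimize it.
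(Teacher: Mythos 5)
Your plan follows the paper's proof essentially step for step: part (a) rigidity via Theorem~\ref{thm:3pauli}, pinning down $P_V$'s measurement $\Pi$ through the consistency checks of parts (b) and (c), using part (d) to force $P_A$ and $P_B$ to embezzle one e-bit on the $\ket{1}_{\reg{V}_2}$ branch, and concluding with Fact~\ref{fact:emb} applied to $\ket{\varphi}=\ket{\EPR}$. The only discrepancy is your error accounting --- the paper's chain of square-root losses yields a final overlap deficit $O(\eps^{1/16})$ and hence $c=1/32$ rather than your claimed $O(\eps^{1/2})$ and $c=1/4$ --- but since the lemma only asserts the existence of some constant $c>0$, this does not affect correctness.
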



\begin{proof}
Fix a strategy for the players that succeeds with probability at least $1-\eps$ in the game. Let $\ket{\psi}_{\reg{VAB}} \in \mH_\reg{V} \otimes \mH_\reg{A} \otimes \mH_\reg{B}$ be the players' entangled state. 
We examine the consequences of the players' strategy having success probability at least $1-4\eps$ in each of the four parts of the game one after the other.

\paragraph{Part (a).}
Applying item 4.\ from Theorem~\ref{thm:3pauli}, for $D\in \{V,A,B\}$ there exists an isometry $W_D:\mH_\reg{D}\to \C^2_{\reg{D}_1} \otimes \C^2_{\reg{D}_2} \otimes \mH_{\reg{D}'}$ such that, under this isometry, the four-outcome POVM applied by a player to determine answers to a question of the form $w_1w_2\in\mP$ (when $\pi=0$) is isometric to the four-outcome POVM induced by Pauli $\sigma_{w_1}$ (when $w_1\in\{x,z\}$) and $\sigma_{w_2}$ (when $w_2\in\{x,z\}$) acting on $\reg{D}_1$ and $\reg{D}_2$ respectively, up to an error $\delta_1 = O({\eps}^{1/4})$.  Moreover, under all three isometries,
\begin{equation}\label{eq:r-0}
\ket{\psi}_{\reg{VAB}}\simeq_{\delta_1} \ket{\GHZ}_{\reg{V}_1\reg{A}_1\reg{B}_1}\ket{\GHZ}_{\reg{V}_2\reg{A}_2\reg{B}_2}\ket{\psi'}_{\reg{V'A'B'}}\;,
\end{equation}
for some tripartite state $\ket{\psi'}$.

For the remainder of the proof we modify the players' strategy to incorporate the isometry, and change their shared state to match exactly the state on the right-hand side of~\eqref{eq:r-0}; we keep the same notation $\ket{\psi}_{\reg{VAB}}$ for the modified state. The success probability of this modified strategy in parts (b), (c) and (d) of the game is at least $1-\eps_1$ for some $\eps_1 = O(\eps + \delta_1)=O(\delta_1) = O({\eps}^{1/4})$.

\paragraph{Part (b).}
When $\pi_V = 1$, $P_V$ applies an eight-outcome POVM measurement that we may assume to be projective. Let $w$ denote the second component of the question to $P_V$. For any value for $w$, let $\Pi_{w,\reg{V}} = \Pi_{w,\reg{V}}^0-\Pi_{w,\reg{V}}^1$ denote a binary observable associated with $P_V$'s first  answer bit, $u\in\{0,1\}$.

Consider the binary observable defined on $\C^2_{\reg{A}_1}\otimes \C^2_{\reg{A}_2}\otimes\C^2_{\reg{B}_1}\otimes\C^2_{\reg{B}_2}$ by
$$R_{\reg{AB}}\,=\,  \sigma_{z,\reg{A}_1} \otimes \proj{0}_{\reg{A}_2} + \sigma_{x,\reg{A}_1} \otimes \proj{1}_{\reg{A}_2}\otimes \sigma_{x,\reg{B}_1} \;.$$
Then we claim that for any $w$,
\begin{equation}\label{eq:r-1}
 \Pi_{w,\reg{V}} \otimes R_{\reg{AB}} \,\approx_{\sqrt{\eps_1}}\,I\;.
\end{equation}
To show~\eqref{eq:r-1} we decompose the $-1$ eigenspace of the observable $\Pi_{w,\reg{V}} \otimes R_{\reg{AB}} $ into a sum of two components, such that the overlap of each component with $\ket{\psi}$ can be bounded from the assumption that the strategy succeeds with probability $1-\eps_1$ in part (b). The first component,
$$ \Pi^1_{w,\reg{V}} \otimes \Big(\proj{0}_{\reg{A}_1} \otimes \proj{0}_{\reg{A}_2} +  \big(\proj{{+}{+}}_{\reg{A}_1\reg{B}_1} +   \proj{{-}{-}}_{\reg{A}_1\reg{B}_1}\big) \otimes \proj{1}_{\reg{A}_2}\Big)\;,$$
corresponds to rejection for the $u=1$ cases of (i) and (ii) in part (b). The second component,
$$ \Pi^0_{w,\reg{V}} \otimes \Big(\proj{1}_{\reg{A}_1} \otimes \proj{0}_{\reg{A}_2} +  \big(\proj{{+}{-}}_{\reg{A}_1\reg{B}_1} +   \proj{{-}{+}}_{\reg{A}_1\reg{B}_1}\big) \otimes \proj{1}_{\reg{A}_2}\Big)\;,$$
corresponds to rejection for the $u=0$ cases. This shows~\eqref{eq:r-1}. Let $\Pi'_{\reg{V}} = \sigma_{z,\reg{V}_1} \otimes \proj{0}_{\reg{V}_2} + \sigma_{x,\reg{V}_{1}} \otimes \proj{1}_{\reg{V}_2}$. Using the fact that the $\GHZ$ state is stabilized by $\sigma_{z,\reg{V}_1} \otimes \sigma_{z,\reg{A}_1}$ as well as by $\sigma_{x,\reg{V}_1} \otimes \sigma_{x,\reg{A}_1} \otimes \sigma_{x,\reg{B}_1}$, it follows  that $\Pi'_{\reg{V}} \otimes R_{\reg{AB}} \ket{\psi} = \ket{\psi}$. Together with~\eqref{eq:r-1}, we have shown that $\Pi_{w,\reg{V}} \approx_{\eps_1} \Pi'_{\reg{V}}$ for all $w$; in particular $\Pi_{w,\reg{V}}$ does not depend on $w$ (to the extent that only its action on $\ket{\psi}$ is considered), and for the remainder of the proof we drop the subscript $w$.

\paragraph{Part (c).} Let $Z_1$ be the observable associated with $P_V$'s outcome $v_2 \in \{\pm 1\}$ when the second component of his question is $r_2$. Using the fact that $\ket{\GHZ}_{\reg{V}_2\reg{A}_2\reg{B}_2}$ is stabilized by $\sigma_{z}\otimes \sigma_z$ acting on $\reg{V}_2\reg{A}_2$ or $\reg{V}_2\reg{B}_2$, success $1-\eps_1$ in this part enforces that
\begin{equation}\label{eq:r-2}
Z_1 \, \approx_{\sqrt{\eps_1}} \,\sigma_{z,\reg{V}_2}\;.
\end{equation}

\paragraph{Part (d).}
From the analysis of part (b) we deduce that conditioned on the referee choosing to execute part (d), and on the outcome $u=0$ having been obtained from $P_V$, the joint state of the players (irrespective of the choice of question $w$ to $P_V$) is
$$ \ket{\psi''} \,\simeq_{\delta_2}\,  \frac{1}{\sqrt{2}} \big(\ket{00}_{\reg{V}_1\reg{V}_2} \ket{00}_{\reg{A}_1\reg{B}_1}\ket{00}_{\reg{A}_2\reg{B}_2} + \ket{+}_{\reg{V}_1}\ket{1}_{\reg{V}_2} \ket{\EPR}_{\reg{A}_1\reg{B}_1} \ket{11}_{\reg{A}_2\reg{B}_2} \big) \otimes \ket{\psi'}_{\reg{VA'B'}}\;,$$
for some $\delta_2=O(\sqrt{\eps_1})= O({\eps}^{1/8})$.
From part (c) we also know that $P_V$'s observable $Z_1$ associated
with answer $v_2$ to question $r_2$ satisfies~\eqref{eq:r-2}.
By item 4.\
from Theorem~\ref{thm:3pauli}, for the players' strategy to succeed
with probability $1-\eps_1$ in part (d) it is necessary that the
observable $X_{1}$ associated with $P_V$'s answer $v_1$ on question
$r_1$ approximately anti-commutes with $Z_1$ on $\ket{\psi''}$, up to
an error $\eps_2 = O(\eps_1^{1/4})= O({\eps}^{1/16})$.
Using~\eqref{eq:r-2}, it follows that $X_{1} \approx_{\eps_2}
\ket{0}\bra{1}_{\reg{V}_2} \otimes U + \ket{1}\bra{0}_{\reg{V}_2} \otimes U^\dagger$,
for some unitary $U$ on $\reg{V}_1\reg{V'}$.
Using again item 4.\
of Theorem~\ref{thm:3pauli}, success in part (d) also implies that
$X_1 \otimes X_\reg{1,A} \otimes X_{1,\reg{B}}$ approximately stabilizes
$\ket{\psi''}$, where $X_\reg{1,A}$ and $X_\reg{1,B}$ are observables
associated with $P_A$ and $P_B$'s outcome $a_1$ and $b_1$ on question
$x_1i_2$ respectively.
Thus
\begin{equation*}
  \begin{split}
    \Big| \bra{0}_{\reg{V}_1} \bra{\psi'}_{\reg{V'A'B'}} \bra{11}_{\reg{A}_2 \reg{B}_2} \bra{\EPR}_{\reg{A}_1\reg{B}_1} \big( U_{\reg{V}_1\reg{V'}} & \otimes X_{1,\reg{A}} \otimes X_{1,\reg{B}} \big)\\
    & \qquad \ket{00}_{\reg{A}_1\reg{B}_1}\ket{00}_{\reg{A}_2\reg{B}_2} \ket{\psi'}_{\reg{V'A'B'}}\ket{+}_{\reg{V}_1}\Big| \geq 1-\delta_3\;,
  \end{split}
\end{equation*}
for some $\delta_3 = O(\delta_2+\eps_2)= O({\eps}^{1/16})$.
 Expanding out registers $\reg{V}_1\reg{V'}$ of $\ket{+}_{\reg{V}_1}\ket{\psi'}_{\reg{V'A'B'}}$ in the eigenbasis of $U$, we obtain a distribution $\{|\alpha_k|^2\}_k$ and a family of states $\{\ket{\psi^{(k)}}_{\reg{A'B'}}\}_k$ such that
$$ \sum_k |\alpha_k|^2 \Big| \bra{\psi^{(k)}}_{\reg{A'B'}} \bra{11}_{\reg{A}_2 \reg{B}_2} \bra{\EPR}_{\reg{A}_1\reg{B}_1} \big(X_{1,\reg{A}} \otimes X_{1,\reg{B}} \big) \ket{00}_{\reg{A}_1\reg{B}_1}\ket{00}_{\reg{A}_2\reg{B}_2} \ket{\psi^{(k)}}_{\reg{A'B'}}\Big| \geq 1-O(\delta_3)\;.$$
By Fact~\ref{fact:emb} applied with $n=2$ and $\ket{\varphi} = \ket{\EPR}$ we deduce the claimed lower bound on the dimension of the players' strategy.  In particular, 
${\frac{1}{16 \log^2(3t)} \leq O(\delta_3)} = O({\eps}^{1/16})$, so, $t = 2^{\Omega({\eps}^{-1/32})}$. 
\end{proof}

\section{Acknowledgements}

We thank William Slofstra and John Watrous for helpful discussions.
Part of this work was conducted when ZJ was visiting the Institute for Quantum
Computing in the University of Waterloo, and also when DL and TV were attending
the ``Quantum Physics of Information'' program at the Kavli Institute for
Theoretical Physics.
This research was supported in part by the Australian Research Council under
Grant No.\ DP200100950 and the National Science Foundation under Grant No.\ NSF
PHY 17-48958.
DL is supported by an NSERC Discovery grant and a CIFAR research grant via the
Quantum Information Science program.
TV is supported by NSF CAREER Grant CCF-1553477, AFOSR YIP award number
FA9550-16-1-0495, a CIFAR Azrieli Global Scholar award, and the IQIM, an NSF
Physics Frontiers Center (NSF Grant PHY-1125565) with support of the Gordon and
Betty Moore Foundation (GBMF-12500028).

%
%
\bibliographystyle{quantum-abbrv}
\bibliography{embezzle}


\end{document}